\newtheorem{theorem}{Theorem}[section]
\newtheorem{lemma}[theorem]{Lemma}
\newtheorem{claim}[theorem]{Claim}
 \newtheorem{question}{Question}
\theoremstyle{definition}
 \newtheorem{definition}[theorem]{Definition}
\newtheorem{example}[theorem]{Example}
\newtheorem{observation}[theorem]{Observation}
\newif\ifqed
\def\GrabProofArgument[#1]{ #1: \egroup\ignorespaces}
\def\proof{\noindent\textbf\bgroup Proof%
	\@ifnextchar[{\GrabProofArgument}{. \egroup\ignorespaces}\global\qedtrue}
\def\qedhere{\ifmmode\tag*{\qedsign}\else\hspace*{\fill}\qedsign\medskip\fi\global\qedfalse}
\def\qedsign{$\Box$}
\newcommand{\agents}{\mathcal{N}}
\newcommand{\items}{\mathcal{M}}
\newcommand{\ite}{b}
\newcommand{\allocation}{\mathcal{A}}
\newcommand{\ralloc}{\mathcal{R}}
\newcommand{\falloc}{\mathcal{F}}
\newcommand{\domain}{D}
\newcommand{\dom}{D}
\newcommand{\valu}{v}
\newcommand{\argmax}{\text{argmax}}
\newcommand{\MMS}{\mathsf{MMS}}
\newcommand{\XOS}{\mathsf{XOS}}
\newcommand{\etal}{\emph{et al.}}
\definecolor{mygreen}{RGB}{20,140,80}
\definecolor{mylightgray}{RGB}{230,230,230}
\definecolor{mygreen}{RGB}{20,140,80}
\definecolor{mydarkgray}{gray}{0.15} 
\definecolor{oceanblue}{HTML}{2c55c2}
\newenvironment{claimproof}[1]{\emph{Proof. }\space#1}{\hfill $\blacksquare$\medskip\par}
\newcounter{proccnt}
\newcommand{\konote}[1]{}
\title{Randomized and Deterministic Maximin-share Approximations for Fractionally Subadditive Valuations}
\author[1,3]{Hannaneh Akrami}
\author[1,4]{Kurt Mehlhorn}
\author[2]{Masoud Seddighin}
\author[1,3]{Golnoosh Shahkarami}
\affil[1]{Max Planck Institute for Informatics}
\affil[2]{Tehran Institute for Advanced Studies}
\affil[3]{Graduiertenschule Informatik, Universit\"at des Saarlandes}
\affil[4]{Universit\"at des Saarlandes}
\begin{document}
	\newcommand{\ignore}[1]{}
\renewcommand{\theenumi}{(\roman{enumi})}
\renewcommand{\labelenumi}{\theenumi.}
\sloppy

%
%

\date{}

\maketitle


\begin{abstract}
We consider the problem of guaranteeing maximin-share ($\MMS$) when allocating a set of indivisible items to a set of agents with fractionally subadditive ($\XOS$) valuations.  
For $\XOS$ valuations, it has been previously shown that for some instances no allocation can guarantee a fraction better than $1/2$ of maximin-share to all the agents. Also, a deterministic allocation exists that guarantees $0.219225$ of the maximin-share of each agent. 
Our results involve both deterministic and randomized allocations. On the deterministic side, we improve the best approximation guarantee for fractionally subadditive valuations to $3/13 = 0.230769$. We develop new ideas on allocating large items in our allocation algorithm which might be of independent interest. Furthermore, we investigate randomized algorithms and the Best-of-both-worlds fairness guarantees. We propose a randomized allocation that is $1/4$-$\MMS$ ex-ante and $1/8$-$\MMS$ ex-post for $\XOS$ valuations. Moreover, we prove an upper bound of $3/4$ on the ex-ante guarantee for this class of valuations.  
\end{abstract}

\section{Introduction}

Fair allocation is a central problem in economics since decades. It arises naturally in real-world applications such as advertising, negotiation, rent sharing, inheritance, etc \cite{caragiannis2019unreasonable,dehghani2018envy,dickerson2014computational,Foley:first,nash1950bargaining,varian1973equity}. In discrete fair division, the basic scenario is that we want to distribute a set $\items$ of $m$ indivisible items among $n$ agents, such that the allocation is deemed fair by the agents. Each agent $i$ has a valuation function $\valu_i: 2^\items \rightarrow \mathbb{R}^+$ that represents her happiness for receiving a subset of items.

How do we evaluate fairness? This question has been the subject of intense debates in various contexts, including philosophy, economics, distributive justice, and mathematics. Since the introduction of the cake-cutting\footnote{Cake-cutting 
is the continuous version of the fair allocation problem where the resource is a single divisible cake.} problem by Hugo Steinhaus in 1942 \cite{Steinhaus:first}, scientists have suggested several different notions to evaluate fairness. The challenge is that a fairness notion must be both reasonable in terms of justice and implementable in practice. Conceived by Steinhaus, proportionality is one of the most natural and prominent notions. An allocation is \emph{proportional} if the share allocated to every agent is worth at least $1/n$ of her value for the entire resource.

Unfortunately,  despite many positive results on proportionality for cake-cutting, simple examples indicate that proportionality is not a proper fairness criterion for the case of indivisible items. For example, when there is one item and two agents, one agent receives nothing, though her proportional share is non-zero assuming that the item has positive value for both of the agents. An alternative form of proportionality adopted to deal with indivisibilities is \emph{Maximin-share} ($\MMS$) introduced by Budish~\cite{Budish:first}. For every agent $i$, the maximin-share of agent $i$, denoted by $\MMS_i$ is defined as follows:
$$
\MMS_i = \max_{\langle \Pi_1,\Pi_2,\ldots, \Pi_n\rangle \in \Omega}\, \min_{1 \leq j \leq n} \valu_i(\Pi_j),
$$
where $\Omega$ is the set of all partitions of $\items$ into $n$ parts. 
It is known that even when all the valuations are additive, 
$\MMS$ allocations, in which each agent $i$ gets at least $\MMS_i$, need not exist~\cite{ feige2022tight, kurokawa2018fair, 10.1145/2600057.2602835}.
Nevertheless, several studies in recent years show that it is possible to guarantee a constant factor of her maximin-share to each agent for various classes of valuation functions, including additive, submodular, and fractionally subadditive.\footnote{We refer to Section \ref{sec:prelim} for a formal definition of these valuation classes.} See Table \ref{sum} for the state-of-the-art guarantees on the maximin-share.

\begin{table}[h]
		\centering
		\sffamily
		\scriptsize
		\begin{tabular}{|c|c|c|}
			\hline \rowcolor{lightgray}
			\rule[-0.0ex]{0pt}{2.5ex}  Valuation Class & Approximation Guarantee & Upper bound \\ 
			\hline \hline 
			\rule[0.ex]{0pt}{2.5ex} Additive &  $\frac{3}{4} + \frac{3}{3836}$  \cite{Akrami2023BreakingT3}& $\frac{39}{40}$ \cite{feige2022tight} \\ 
			\hline 
			\rule[0.0ex]{0pt}{2.5ex} Submodular & $\frac{10}{27}$ \cite{uziahu2023fair} & $\frac{3}{4}$ \cite{ghodsi2018fair}\\ 
			\hline 
			\rule[0.0ex]{0pt}{2.5ex} Fractionally Subadditive & $0.219225$  \cite{seddighin2022improved}& $\frac{1}{2}$ \cite{ghodsi2018fair}\\ 
			\hline 
			\rule[0.ex]{0pt}{2.5ex}Subadditive & $\frac{1}{\log n \log \log n}$ \cite{seddighin2022improved} & $\frac{1}{2}$ \cite{ghodsi2018fair} \\ 
			\hline 
		\end{tabular} 
		\vspace{0.1in}
		
	\caption{A summary of the results for $\MMS$ in different valuation classes.}
	\label{sum}
\end{table}

Let us revisit the instance with one item and two agents. Suppose that the item has value $6$ for both agents. By definition, the proportional share of each agent is $6/2 = 3$, and since one agent receives no item, satisfying proportionality or any approximation of it is impossible. On the other hand,  we have $\MMS_1 = \MMS_2 = 0$. Thus, allocating the item to any agent satisfies maximin-share. Indeed, we can circumvent the non-guaranteed existence of a fair allocation by reducing our expectation of fairness to the maximin-share. However, regardless of how we allocate the item, one agent receives one item, and the other receives nothing. Therefore, having one agent with zero utility is inevitable for any deterministic allocation in this example. The question then arises: Can we do better? 

One way to improve the allocation is to use randomization and obtain a better guarantee in expectation (\emph{ex-ante}). 
For example, we can allocate the item to each agent with probability $1/2$. This way, the expected utility of each agent is equal to $3$. In economic terms, this allocation satisfies proportionality \emph{ex-ante}. Note that one agent receives no item \emph{ex-post} (that is, after fixing the outcome); however, it guarantees proportionality ex-ante to both agents. 

Considering random allocations and ex-ante fairness makes the problem much handier. For instance, assuming there are $n$ items and $n$ agents, allocating each item to each agent with probability $1/n$ satisfies proportionality ex-ante. However, this randomized allocation has no ex-post fairness guarantee: with a non-zero probability, the outcome allocates all the items to one agent, and the rest of the agents receive no item. It is desirable to find allocations with simultaneous ex-ante and ex-post guarantees. The support of such an allocation is limited to outcomes with some desirable fairness guarantee. For example, consider the following random allocation: we choose a random permutation of the items and allocate the $i^{\textrm{th}}$ item in the permutation to agent $i$. This allocation satisfies proportionality ex-ante and maximin-share ex-post. 

This approach is first introduced by Aziz~\cite{DBLP:journals/corr/abs-2002-10171}. Recently, several studies have investigated randomized allocations with both ex-ante and ex-post guarantees. Some notable results with the focus on additive valuations are (i) an ex-ante envy-free and ex-post EF1 allocation algorithm \cite{freeman2020best}, (ii) an ex-ante proportional and ex-post $1/2$-$\MMS$ allocation algorithm \cite{babaioff2022best}, and (iii) an ex-post $3/4$-MMS and ex-ante $0.785$-MMS allocation algorithm \cite{Akrami2023ImprovingAG}. Very recently, Feldman \etal~\cite{cycle-breaking} studied best-of-both-worlds for subadditive valuations and gave an allocation algorithm with ex-ante guarantee of $1/2$-envy-freeness and ex-post guarantee of $1/2$-EFX and EF1. 

In this paper, we explore fair deterministic and randomized allocations for fractionally subadditive valuation functions. A valuation function $\valu_i(\cdot)$ is fractionally subadditive ($\XOS$),  if there exists a family of additive valuation functions $u_{i,1},u_{i,2}, \ldots, u_{i,\ell} : 2^\items \rightarrow \mathbb{R}_{\geq 0}$ such that for every set $S$ we have 
	$$
	\valu_i(S) = \max_{1 \leq k \leq \ell} u_{i,k}(S).
	$$ 
Fractionally subadditive is more general than additive, gross substitute, and submodular; it is less general than subadditive set functions.

Our fairness notion is maximin-share. We are looking for allocations that satisfy an approximation of maximin-share both ex-ante and ex-post for fractionally subadditive valuations. Of course, the ex-ante guarantee will be at least as strong as the ex-post one. We will also improve the ex-post guarantee (deterministically) for this class of valuations.
We refer to the next subsection for an overview on our results and techniques.  

\subsection{Our Results and Techniques}\label{sec:results}
In this paper, we provide improved approximation guarantees for the maximin-share in the fractionally subadditive setting. We investigate randomized and deterministic allocation algorithms. MMS is scale-invariant and in the rest of this paper, without loss of generality, we assume $\MMS_i=1$ for all agents $i$.

\subsection*{Randomized Allocations} For randomized allocations, we take one step toward extending the best-of-both-worlds idea for fairness concepts to valuations more general than additive. For the additive setting, Babaioff \etal~\cite{babaioff2022best} proved the existence of randomized allocations that are proportional ex-ante and $1/2$-$\MMS$ ex-post. However, as we show in Section \ref{sec:exante}, though guaranteeing proportionality ex-ante is easy for valuations such as submodular, $\XOS$, and subadditive, this notion is not always a proper choice as a fairness criterion. In fact, for some instances, proportionality can be as small as $O(1/n)$ of the $\MMS$ value, which is highly undesirable. Therefore, here we focus on guaranteeing $\MMS$ approximations both ex-ante and ex-post. More precisely, we are looking for randomized allocations that are $\alpha$-$\MMS$ ex-ante and $\beta$-$\MMS$ ex-post, where $0<\beta< \alpha $. Very recently, Akrami et al. \cite{Akrami2023ImprovingAG} studied the same question when all agents have additive valuations and proved the existence of a randomized allocation which is $3/4$-MMS ex-post and $0.785$-MMS ex-ante.

In contrast to the additive setting, for $\XOS$ valuations guaranteeing $\MMS$ ex-ante is not easy. Recall that in the additive setting, a fractional allocation that allocates a fraction $1/n$ of each item to each agent is proportional and consequently $\MMS$. However, for some $\XOS$ instances this allocation is $O(1/n)$-$\MMS$ (Observation \ref{obs:2}). Indeed, we show that there are instances for which guaranteeing $\MMS$ ex-ante is not possible. More precisely, we show that there are instances such that no randomized allocation can guarantee a factor better than $3/4$ of her maximin-share to each agent (Lemma \ref{lem:1}).

On the positive side, we propose an algorithm that finds a randomized allocation which is $1/4$-$\MMS$ ex-ante.
Furthermore, by leveraging additional innovative ideas, we extend this result to encompass both ex-ante and ex-post guarantees.

\begin{restatable}{theorem}{exante}\label{thm:1}
    For any instance with $\XOS$ valuations, there exists a randomized allocation that is $1/4$-$\MMS$ ex-ante. 
\end{restatable}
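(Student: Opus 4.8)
The plan is to construct a randomized allocation as a distribution over deterministic allocations, where each agent's expected value meets the $1/4$-$\MMS$ threshold. Since we normalize $\MMS_i = 1$ for all agents, the goal reduces to finding a random allocation in which each agent receives expected value at least $1/4$. The natural starting point is to understand what an agent's $\MMS$ partition certifies: for each agent $i$, there exists a partition $\langle \Pi^i_1, \ldots, \Pi^i_n\rangle$ such that $\valu_i(\Pi^i_j) \geq 1$ for every bundle $j$. The difficulty is that different agents have different witnessing partitions, so we cannot simply hand every agent one of her own bundles simultaneously.

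First I would separate items into \emph{large} and \emph{small} relative to each agent, since the $\XOS$ structure makes large items the main source of trouble---a single item can carry most of an agent's value in a bundle, and the $O(1/n)$ lower bound in Observation~\ref{obs:2} shows that naive uniform fractional allocation fails precisely because of such concentrated value. The plan is to set up a fractional (randomized) assignment that, on the one hand, gives each agent a guaranteed expected share from the additive representatives of her $\XOS$ function, and on the other hand handles large items carefully so that their probability mass is not spread too thin. Because $\valu_i(S) = \max_k u_{i,k}(S)$ for additive functions $u_{i,k}$, a convenient reduction is to work with a single well-chosen additive lower bound $u_{i,k}$ per agent (or per bundle), turning $\XOS$ reasoning into additive reasoning where proportional-type arguments become available.

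The key steps, in order, are: (1) fix each agent's $\MMS$-witnessing partition and the associated additive supporting functions; (2) design a fractional allocation---most likely via an LP or an explicit matching/transportation scheme---in which each item is allocated with total probability one while each agent accrues expected additive value at least $1/4$ against her supporting function; (3) use the fact that the $\XOS$ value dominates each additive representative to transfer the $1/4$ additive guarantee to a $1/4$-$\MMS$ guarantee on the true valuation; and (4) realize the fractional allocation as a lottery over integral allocations (for the ex-ante claim, only the marginals matter, so a Birkhoff--von~Neumann-style decomposition or any feasible fractional point suffices). The factor $1/4$ should fall out as the product of the constants introduced when separating large items (to avoid over-dilution) and when passing from the guaranteed fractional share to the additive bound.

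The hard part will be step~(2): ensuring that large items do not wreck the expected-value bound. Intuitively, if an item is large for many agents, each such agent wants significant probability mass on it, but the total mass is only one; the crux is to show that the $\MMS$ normalization limits how many agents can simultaneously regard an item as large while still needing it, so that a feasible fractional allocation achieving $1/4$ for everyone exists. I expect this to require a counting or averaging argument over the witnessing partitions---bounding the number of large items per bundle and exploiting that each agent has $n$ bundles each worth at least $1$---rather than a direct LP-duality computation. Once the feasibility of the fractional allocation is established, the remaining steps are essentially bookkeeping, and the ex-ante conclusion follows immediately since ex-ante guarantees depend only on the marginal allocation probabilities.
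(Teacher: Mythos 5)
There is a genuine gap: your step~(2) --- the existence of a fractional allocation giving every agent additive expected value at least $1/4$ --- is exactly the heart of the theorem, and you leave it as an unproven plan (``a counting or averaging argument over the witnessing partitions''). No such counting argument is carried out, and it is not clear it can be: the paper does something structurally different. It defines truncated valuations $\bar{\valu}_i(\cdot) = \min(1/2, \valu_i(\cdot))$ and takes the complete \emph{fractional} allocation $\falloc$ maximizing the capped welfare $Z = \sum_i \bar{\valu}_i(\falloc) \le n/2$. If some agent had value below $1/4$, her $\MMS$ partition (each bundle worth $\ge 1$, so splittable into $2n$ fractional subbundles each worth $\ge 1/2$ by the $\XOS$ property) would, via the contribution superadditivity lemma $\sum_k C_{\bar v}^{\falloc}(B_k) \le Z < n/2 - 1/4$, contain a subbundle $B_k$ contributing less than $1/4$ to welfare; reassigning $B_k$ to her removes less than $1/4$ of welfare but adds more than $1/4$ (her capped value jumps from below $1/4$ to the cap $1/2$), contradicting optimality. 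This exchange argument against a welfare-maximizing allocation --- not a feasibility/counting argument over large items --- is the missing key idea. Relatedly, your emphasis on separating out large items is misplaced for this statement: the paper needs the large-item preprocessing only for the \emph{ex-post} guarantee (Theorem~\ref{thm:2}); the truncation at $1/2$ already neutralizes value concentration for the ex-ante bound.

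Two further cautions. First, fixing ``a single well-chosen additive lower bound $u_{i,k}$ per agent'' in advance is fragile, since different bundles of an agent's $\MMS$ partition are witnessed by different additive functions, and the bundle she ultimately receives may be best-valued under yet another; the paper works with the true (truncated) $\XOS$ values throughout and invokes a maximizing additive function only locally, in Observation~\ref{eq:2} and in the rounding step. Second, your remark that ``for the ex-ante claim, only the marginals matter'' is true only for the additive lower bound you carry, not for the $\XOS$ value itself --- the paper's Example~\ref{ex:1} exhibits two randomized allocations with identical marginals $f_{i,j}=1/n$ whose ex-ante $\MMS$ guarantees differ by a factor of order $n/\log n$. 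The paper's Lemma~\ref{randomized versus fractional} (independent rounding of $\falloc$, giving $v_i(\ralloc) \ge v_i(\falloc)$) is what legitimizes this step; as stated, your step~(4) is acceptable in spirit but needs that justification made explicit.
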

	
The idea to prove the approximation guarantee of our allocation is inspired by the work of Ghodsi \etal~\cite{ghodsi2018fair}.
In fact, we show that the fractional allocation $\falloc$ which is obtained from  the following program is $1/4$-$\MMS$:
\begin{align}
	\text{maximize }   &\sum_{1 \leq i \leq n}    u_i  \nonumber      \\
	\text{subject to } &  \sum_{1 \leq i \leq n} f_{i,j} = 1& \forall_j  \nonumber\\
	&  f_{i,j} \geq 0& \forall_{i,j} \nonumber \\
	&u_i = \min(\frac{1}{2}, \max_k \sum_j u_{i,k}(\{\ite_j\}) f_{i,j}). 
	& \forall_{i}  \label{opt:1}
\end{align}
Intuitively, Program \ref{opt:1} defines an alternative valuation function $\bar{\valu}_i(\cdot)$ for each agent $i$ and then finds an allocation that maximizes social welfare with respect to these valuations. For every $i$, $\bar\valu_i(\cdot)$ is the same as $\valu_i(\cdot)$, except that for every bundle $X$ with $\valu_i(X)>1/2$ we have $\bar \valu_i(X)=1/2$. From an economical standpoint, one can see the answer of this program as an interesting trade-off between fairness and social welfare.

We can then convert the fractional allocation $\falloc$ into a randomized one (Theorem \ref{thm:fei} and Lemma \ref{randomized versus fractional}). 
However, there is no non-trivial guarantee on the fairness of the ex-post allocation. To resolve this issue, we add one additional step to our algorithm (namely allocating single large items), and also add another constraint to the optimization program. Before solving the optimization problem, we check to see if a single item can satisfy an agent. The main goal of this step is to make sure that the value of each remaining item for the remaining agents is small enough so that we can use Theorem \ref{thm:fei} to convert the fractional allocation into a randomized one with an ex-post guarantee. 

We also add another constraint to the optimization problem to obtain the following half-integral optimization program:

\begin{align}
	\text{maximize }   &\sum_{1 \leq i \leq n}    u_i  \nonumber      \\
	\text{subject to } &  \sum_{1 \leq i \leq n} f_{i,j} = 1& \forall_j  \nonumber\\
	&  f_{i,j} \in \{0,1/2,1\} & \forall_{i,j} \nonumber \\
	&u_i = \min(\frac{1}{2}, \max_k \sum_j u_{i,k}(\{\ite_j\}) f_{i,j}).
	& \forall_{i}  \label{opt:3}	
\end{align}
Despite this additional step and constraint, we prove that the answer of Program \ref{opt:3} also gives the $1/4$-$\MMS$ ex-ante approximation guarantee.  
However, note that the upper bound on the value of items obtained from the additional step combined with Theorem \ref{thm:fei} still gives no approximation guarantee better than $0$ for the ex-post allocation, because there might be some items with value close to $1/4$ in the bundle of agents. To prove the ex-post guarantee, we provide a more intricate analysis of the method that is used in Theorem \ref{thm:1} and show that our allocation is $1/8$-$\MMS$ ex-post. The fact that the allocation is half-integral plays a key role in the proof. The pseudocode of our approach is shown in Algorithm \ref{algo:rand}.

\begin{restatable}{theorem}{ExAnteExPost}\label{thm:2}
For any instance with $\XOS$ valuations, Algorithm \ref{algo:rand} returns a randomized allocation that is $1/4$-$\MMS$ ex-ante and $1/8$-$\MMS$ ex-post.
\end{restatable}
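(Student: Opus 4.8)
The plan is to build the randomized allocation in two phases and then analyze the ex-ante and ex-post guarantees separately, leaning on the infrastructure that Theorem~\ref{thm:1} and Theorem~\ref{thm:fei} already provide. First I would run the ``large item'' preprocessing step: while there is an agent $i$ and a single item $\ite_j$ with $\valu_i(\{\ite_j\})$ above some threshold (on the order of $1/4$), allocate $\ite_j$ to $i$ deterministically, remove both from the instance, and recurse. The point of this step is twofold: each such agent is trivially satisfied to a constant fraction of her $\MMS$ deterministically (hence in every realization ex-post), and after the step every surviving item has small value for every surviving agent. This smallness is precisely the hypothesis needed to invoke Theorem~\ref{thm:fei} to round a fractional allocation into a distribution over integral allocations without losing too much per-agent value. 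I would need to check that removing an item and an agent does not decrease the $\MMS$ of the remaining agents on the reduced instance, which follows from the standard $\MMS$-monotonicity argument (the restriction of any optimal $\MMS$-partition yields a valid partition of the smaller instance into $n-1$ parts).

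Next I would solve the half-integral Program~\ref{opt:3} on the reduced instance to obtain a fractional allocation $\falloc$ with $f_{i,j}\in\{0,1/2,1\}$. The ex-ante half of the theorem is the easier part: I would argue, exactly as in the proof of Theorem~\ref{thm:1}, that the objective of Program~\ref{opt:3} is at least $n/4$, so that on average $u_i\ge 1/4$, and then that the integrality restriction to $\{0,1/2,1\}$ does not destroy this bound. The cleanest way is to exhibit a feasible half-integral solution of value at least $n/4$ (for instance by taking the $\MMS$-partition-based assignment used for Program~\ref{opt:1} and showing it can be made half-integral, or by a rounding argument on the LP optimum that preserves the per-agent capped contribution $\min(1/2,\cdot)$). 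Converting $\falloc$ into an actual randomized allocation then uses Lemma~\ref{randomized versus fractional} together with Theorem~\ref{thm:fei}, giving a distribution whose expected per-agent value matches $u_i\ge 1/4$, which is the ex-ante claim.

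The ex-post guarantee is where the real work lies, and it is the step I expect to be the main obstacle. The naive bound from Theorem~\ref{thm:fei} combined with the large-item threshold is useless here, because an item of value close to $1/4$ can still be ``lost'' in a bad realization, dragging a bundle's value to nearly zero. The key structural fact to exploit is half-integrality: each item is split among \emph{at most two} agents in $\falloc$, each holding weight exactly $1/2$. I would use this to set up the rounding so that in the integral outcome each agent receives, in expectation over the randomness, a guaranteed share of the capped value she was promised fractionally; concretely, because an agent's fractional bundle is a half-weighted combination of items that are each small, losing one ``half'' of the split can cost at most a controlled amount, and the $\min(1/2,\cdot)$ cap in the objective bounds how concentrated her value can be. Carefully accounting for the worst realization and combining the small-item bound with the factor-$2$ structure of half-integrality should yield that every agent retains at least $1/8$ of her $\MMS$ in every outcome in the support. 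I would organize the ex-post analysis around a single bundle, bound the value the agent's XOS additive representative assigns to the surviving (non-lost) half, and then sum; the delicate point is ensuring the thresholds from the preprocessing step, the cap at $1/2$ in Program~\ref{opt:3}, and the rounding guarantee of Theorem~\ref{thm:fei} are mutually consistent so that the final constant is exactly $1/8$ and not smaller.

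Finally I would assemble the pieces into the statement for Algorithm~\ref{algo:rand}: the deterministically allocated large-item agents are satisfied to the required fraction in every realization, and the agents participating in the rounded half-integral allocation inherit the $1/4$-$\MMS$ ex-ante bound and the $1/8$-$\MMS$ ex-post bound just established. Since $1/8 \le 1/4$, both guarantees hold simultaneously for every agent, completing the proof.
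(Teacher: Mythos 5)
There is a genuine gap, and it sits exactly where you yourself flag ``the real work'': the ex-post bound. Your sketch says that half-integrality means ``losing one half of the split can cost at most a controlled amount'' and that careful accounting ``should yield'' $1/8$, but this is an assertion, not an argument. With the loss bounded only by the largest fractionally held item (which is what Theorem~\ref{thm:fei} gives), items of value just below $1/4$ yield the vacuous ex-post bound $1/4-1/4=0$, as the paper notes. The paper closes this gap with a concrete new rounding (its Lemma~\ref{lem:fractorand}): after padding with dummy items so each agent's fractional mass is integral, for each agent sort the items with $f_{i,j}=1/2$ in decreasing value and pair consecutive items; build a bipartite graph in which every half-owned item and every pair-vertex has degree two, decompose it into cycles and hence into two perfect matchings, and let these define exactly two integral allocations $\allocation^1,\allocation^2$, each realized with probability $1/2$. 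Because the two items of each consecutive pair land in different allocations, a telescoping estimate bounds each agent's loss by \emph{half} the value of her largest half-owned item, i.e.\ below $(1/4)/2 = 1/8$, giving ex-post value above $1/4-1/8=1/8$. Note also the coordination issue your sketch does not address: an item with $f_{i,j}=1/2$ is shared by \emph{two} agents, so the ``keep every other item'' pairings of different agents interact, and it is precisely the cycle/matching decomposition that makes all per-agent pairings simultaneously realizable in a single pair of allocations. Without this (or an equivalent) construction, the factor-$2$ improvement over Theorem~\ref{thm:fei}, and hence the $1/8$ constant, is unproven.

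Your ex-ante argument is also misstated, though more easily repaired. Showing that the objective of Program~\ref{opt:3} is at least $n/4$ (or exhibiting a feasible half-integral solution of that value) gives only an \emph{average} guarantee, not the per-agent bound $\valu_i(\falloc)\geq 1/4$ that the theorem needs. What the paper actually does (Observation~\ref{obs:stupper}, mirroring Theorem~\ref{thm:1}) is an exchange argument at the optimum: if some agent's capped value is below $1/4$, split each bundle of her $\MMS$ partition by cutting \emph{every item into two half-units} --- this specific halving is what keeps the improved solution inside $\{0,1/2,1\}$ --- then by Lemma~\ref{lem:supper-additive} some subbundle contributes less than $1/4$ to the capped welfare, and reassigning it to her strictly increases the objective, contradicting optimality. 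You gesture at ``showing the split can be made half-integral,'' which is the right instinct, but the average-value framing you lead with would not close the proof. Your preprocessing step and its justification via Lemma~\ref{ass:1} do match the paper, as does the overall three-phase architecture.
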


\subsection*{Deterministic Allocations} Since the impossibility result on $\MMS$ by Proccacia and Wang \cite{10.1145/2600057.2602835}, there has been considerable work of establishing approximate $\MMS$ guarantees for various classes of valuation functions~\cite{Akrami2023BreakingT3,akrami2023simplification,DBLP:journals/teco/BarmanK20, DBLP:journals/ai/GargT21, ghodsi2018fair, kurokawa2018fair, seddighin2022improved}. The best previous deterministic guarantee for $\XOS$ valuations was $0.219225$~\cite{seddighin2022improved}. We improve the guarantee to $3/13 \approx 0.2307$. In order to do so, like in our randomized algorithm, we first allocate large items. However, in addition to allocating single large items, here we also allocate pairs and triples of items if they satisfy some agent up to $3/13$ factor of their $\MMS$ value. This way we get a stronger upper bound on the value of most of the remaining items. 

In the last step, we output an allocation which maximizes the social welfare with respect to valuations $\bar{v}_i(\cdot) = \min\{\frac{6}{13}, v_i(\cdot)\}$ for the remaining agents and items. Note that this last step is also in correspondence with the last step of the randomized algorithm. Here we cap the value of the bundles with $6/13$ (instead of $1/2$) and we output an integral allocation with maximum social welfare (instead of a half-integral allocation with maximum social welfare). Moreover, here we need to do a more careful analysis to show that the output is indeed a $3/13$-$\MMS$ allocation. The pseudocode of our approach is shown in Algorithm \ref{algo:det}.

\begin{restatable}{theorem}{ExPost}\label{thm:ExPost}
For any instance with $\XOS$ valuations, Algorithm \ref{algo:det} returns a $3/13$-$\MMS$ allocation.
\end{restatable}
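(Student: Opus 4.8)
The plan is to normalize so that $\MMS_i=1$ for every agent and to read Algorithm \ref{algo:det} as a two-phase procedure: a \emph{large-item} phase that greedily hands out single items, pairs, and triples, followed by a capped social-welfare maximization on the leftover agents and items. Throughout I would maintain the invariant that, in the current reduced instance with $n'$ surviving agents and item set $M'$, every surviving agent still has $\MMS$ value at least $1$. Under this invariant, an agent removed in the large-item phase is satisfied by construction (she receives a set worth $\ge 3/13$), so the theorem reduces to showing that the final welfare-maximizing allocation gives each \emph{remaining} agent value at least $3/13$.

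First I would establish the \emph{validity} of each reduction in the large-item phase, i.e.\ that removing one agent together with her awarded set $S$ preserves the invariant for everyone else. For a single item this is the classical argument: take a witnessing $\MMS$ partition of a surviving agent, locate the part containing the removed item, and merge its remainder into another part, producing $n'-1$ parts each still worth $\ge 1$. The subtle and genuinely new part is doing this for pairs and triples. Here I would exploit minimality of $S$: since the single-item step has already driven every item below $3/13$, a \emph{minimal} satisfying pair or triple has value in the narrow band $[3/13,6/13)$ by subadditivity. The hard step is a reduction lemma showing that deleting such a bounded-value, bounded-cardinality set $S$ (which may cut across several parts of another agent's $\MMS$ partition) still admits a repartition of $M'\setminus S$ into $n'-1$ parts of value $\ge 1$; this is exactly where the ``new ideas on allocating large items'' are needed, and I expect it to be the first main obstacle. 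A valuable by-product of exhausting this phase is the quantitative control it buys: every surviving item is worth $<3/13$ to every surviving agent, and no surviving pair or triple is worth $\ge 3/13$, which sharply limits the number of medium-valued items an agent can hold.

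For the final step I would analyze $A=(A_1,\dots)$ maximizing $\sum_i \bar v_i(A_i)$ with $\bar v_i(\cdot)=\min\{6/13,\,v_i(\cdot)\}$. The engine is a local-exchange inequality from optimality: moving any $S\subseteq M'$ onto $i$ cannot raise capped welfare, so, using subadditivity of $\bar v_{i'}$,
\[
\bar v_i(A_i)\;\ge\;\min\{6/13,\,v_i(S)\}\;-\;\sum_{i'\neq i}\bar v_{i'}(S\cap A_{i'}).
\]
Suppose toward a contradiction that some surviving $i$ has $v_i(A_i)<3/13$, and let $P_1,\dots,P_{n'}$ be a witnessing $\MMS$ partition of $i$, so $v_i(P_j)\ge 1$ and $\min\{6/13,v_i(P_j)\}=6/13$ for every $j$. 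Applying the inequality with $S=P_j$ shows that for each $j$ the items of $P_j$ are worth, in aggregate and under the cap, at least $6/13-3/13=3/13$ to the agents currently holding them. Naively summing over all $j$ is too lossy, since an already-satisfied agent can legitimately carry capped value up to $6/13$; so the careful part of the argument is to show that the cap at $6/13$ together with the Phase-1 bounds (every item $<3/13$, no satisfying pair or triple) prevents the other agents from blocking all $n'$ bundles simultaneously, exhibiting at least one bundle $P_j$ that could be moved to $i$ with a net welfare gain -- the desired contradiction.

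I expect the difficulty to be concentrated in the two charging arguments above. The first is the multi-item reduction lemma for pairs and triples, which must control how a deleted set of value $<6/13$ fragments another agent's $\MMS$ partition. The second is converting the per-bundle exchange inequalities into a global contradiction without the lossiness of a blunt summation; this is the ``more careful analysis'' the text alludes to, and it is here that the exact constants $6/13$ and $3/13=\tfrac12\cdot\tfrac{6}{13}$, together with the stronger item-value bounds bought by allocating pairs and triples rather than only singletons, must be balanced precisely. Both obstacles hinge on the same strengthened structural control from the large-item phase, which is ultimately why the deterministic guarantee improves to $3/13$.
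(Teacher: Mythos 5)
Your outline correctly identifies the two-phase structure and the exchange inequality, but it rests on a false invariant, and the paper's proof is built precisely around avoiding it. You propose to maintain that every surviving agent keeps $\MMS$ value at least $1$ through the pair/triple phase, and you flag the needed ``reduction lemma'' as the first obstacle; that lemma is not merely hard, it is false for $\XOS$ valuations. If a removed pair hits two distinct parts of another agent's $\MMS$ partition, each damaged part can fall to value $1-t/2$ (with $t=6/13$), and since $\XOS$ valuations are only subadditive --- the maximizing additive function for the union need not be maximizing for either piece --- merging the two damaged parts guarantees only the larger of the two values, not their sum. So no repartition of the remaining items into $n'-1$ parts of value at least $1$ need exist. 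The paper explicitly concedes that Steps \ref{step-2} and \ref{step-3} may decrease the remaining agents' $\MMS$, and replaces your invariant with amortized accounting: it fixes the $\MMS$ partition of the hypothetical unhappy agent $i^*$ as it stood after Step \ref{step-1} (where Lemma \ref{ass:1} still applies), classifies its $n$ bundles by how many items Steps \ref{step-2}--\ref{step-3} remove from them into $B_0,B_1,B_2,B_{\geq 3}$, and proves the purely combinatorial bound $n' \leq |B_0| + \tfrac{2}{3}|B_1| + \tfrac{1}{3}|B_2|$ (Lemma \ref{cl:1:1}), since each agent satisfied with a pair or triple consumes $2$ or $3$ items. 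Damage to the partition is paid for by a drop in the number of surviving agents; no repartition lemma is ever needed.

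The second gap is in closing the Phase-2 contradiction. Your exchange inequality is essentially the paper's Lemma \ref{lem:important}, but your per-bundle charge of $6/13-3/13=t/2$ yields total contribution only $(t/2)\,n'$ against a capped-welfare budget just below $t\,n'$, so, as you yourself note, the blunt sum fails --- and your proposal leaves the fix unspecified (``exhibit one movable bundle''), whereas the actual fix is quantitative and is exactly where the constant $6/13$ is pinned down. Each intact bundle $X \in B_0$ is \emph{split} into two sub-bundles $X^1, X^2$, each of value at least $t$, so each half separately clears the cap and contributes more than $t/2$, making each intact bundle contribute more than $t$ rather than $t/2$ (Lemma \ref{B0-bound}); bundles that lost one item are split to contribute at least $\tfrac{2}{3}t$ (Lemma \ref{B1-bound}), and bundles that lost two items still have value above $1-t > t$ and contribute more than $t/2$ (Lemma \ref{B2-bound}). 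The splittability is precisely what the triple-elimination step buys: by Observation \ref{obs:5:1}, any item beyond the two largest in a minimal subset has value below $t/6$, so a minimal subset of value at least $t$ has value below $\tfrac{7}{6}t$, leaving the complement value at least $1-\tfrac{7}{6}t = t$ exactly at $t=6/13$. Summing, $t|B_0| + \tfrac{2}{3}t|B_1| + \tfrac{1}{2}t|B_2| \geq t\,n'$ by Lemma \ref{cl:1:1}, contradicting the welfare bound $t(n'-\tfrac{1}{2})$. Without both the damage-accounting lemma and the bundle-halving idea, your outline cannot be completed as stated.
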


\subsection{Further Related Work}

\paragraph*{Maximin-share fairness notion.}
Budish~\cite{Budish:first} introduced the $\MMS$-concept. As already mentioned, there are instances with additive valuations for which no $\MMS$ allocation exists~\cite{ feige2022tight, kurokawa2018fair, 10.1145/2600057.2602835}.
Therefore, several studies considered the approximation guarantees. For additive valuations,  the first approximation guarantee, $2/3$-$\MMS$, is given by Kurokawa et al.~\cite{kurokawa2018fair}.
The approximation factor improved over time by Ghodsi et al.~\cite{ghodsi2018fair} to $3/4$, and then to $3/4 + o(1)$ by Garg and Taki~\cite{DBLP:journals/ai/GargT21} and Akrami et al.~\cite{akrami2023simplification}. The best-known result for this valuation class is $3/4 + 3/3836$ which was very recently obtained by Akrami and Garg~\cite{Akrami2023BreakingT3}.
Moreover, there are several works on $\MMS$ approximations for submodular valuations~\cite{DBLP:journals/teco/BarmanK20, ghodsi2018fair}, and subadditive valuations~\cite{ghodsi2018fair, seddighin2022improved}. Also, Farhadi et al.~\cite{asymmetric} proved the tight bound of $1/n$-MMS for the setting where agents can have different entitlements.

\paragraph*{Fractionally subadditive valuations.}

For fractionally subadditive valuations, Ghodsi et al.~\cite{ghodsi2018fair} gave a $1/5$-$\MMS$ approximation algorithm; Seddighin et al.~\cite{seddighin2022improved} improved the factor to $1/4.6 = 0.2173913$. 
A special case of fractionally subadditive valuations has been studied by Li and Vetta~\cite{DBLP:conf/wine/LiV18}.
Also other notions of fairness have been studied for this class of valuation.
Feige~\cite{DBLP:journals/siamcomp/Feige09} studied maximizing social welfare. Also, Hoefer et al.~\cite{DBLP:journals/corr/abs-2209-03908} proved the existence of randomized allocations which are proportional ex-ante and proportional up to one item (PROP1) ex-post.
Furthermore, the notion of envy-freeness up to any item (EFX) has been considered by Plaut and Roughgarden~\cite{DBLP:journals/siamdm/PlautR20}.

\paragraph*{Best-of-both-worlds fairness.}
Aleksandrov et al.~\cite{DBLP:conf/ijcai/AleksandrovAGW15} first explored this line of research in the context of the food bank problem for a special case of additive valuations. 
Freeman et al.~\cite{freeman2020best} proposed a randomized polynomial-time algorithm for additive valuations that is envy-free ex-ante and EF1 ex-post.
Afterward, Aziz~\cite{DBLP:journals/corr/abs-2002-10171} modified this algorithm to also get the Probabilistic Serial fractional outcome~\cite{DBLP:journals/jet/BogomolnaiaM01} as well as a weak notion of efficiency with a simpler proof.
Babaioff et al.~\cite{babaioff2022best}  provided an allocation algorithm in which the expected value of each agent’s bundle is at least a $1/n$-fraction of her value for the set of all items (ex-ante proportional) and ex-post proportional up to one item and $1/2$-MMS allocations for the additive valuations.
Feldman et al.~\cite{cycle-breaking}  gave an allocation algorithm for subadditive valuations with guarantees of $1/2$-envy-freeness ex-ante and $1/2$-EFX and EF1 ex-post.
Moreover, studies have been conducted for additive valuations with binary marginals~\cite{DBLP:journals/corr/abs-2002-10171, DBLP:conf/wine/0002PP020}, and for matroid rank valuations~\cite{aamas23Aziz, DBLP:conf/aaai/BabaioffEF21}.


\section{Preliminaries}\label{sec:prelim}
A fair division instance is denoted by $\mathcal{I} = (\agents, \items, \mathcal{V})$ where $\agents = [n]$ is a set of $n$ agents, $\items$ is a set of $m$ indivisible items and $\mathcal{V}=(v_1, \ldots, v_n)$ is a vector of valuation functions. Each agent $i$ has a valuation function $\valu_i: 2^\items \rightarrow \mathbb{R}_{\geq 0}$ that represents her value for every bundle of items. Thus, for every set $S$ of items, $\valu_i(S)$ shows the value of $S$ to agent $i$.
We denote the $j^{\textrm{th}}$ item by $\ite_j$ and write $v_i(b_j)$ for the value $v_i(\{b_j\})$ of $b_j$ to agent $i$. An allocation of items is a partition of $\items$ into $n$ parts (i.e., bundles) 
where the $i^{\textrm{th}}$ bundle is the share allocated to agent $i$. For an allocation $\allocation$, we denote the bundle of agent $i$ by $\allocation_i$. 

We assume that the valuations are monotone, i.e., for every two sets $S$ and $T$ such that $S \subseteq T$, we have $\valu_i(S) \leq \valu_i(T)$, and normalized, i.e., $\valu_i(\emptyset)=0$ for all $i \in \agents$. Our discussion in this paper involves two classes of valuation functions: additive and fractionally subadditive valuation functions. A valuation function $\valu_i(\cdot)$ is additive, if for every set $S$ of items, we have $$\valu_i(S) = \sum_{\ite_j \in S} \valu_i(\ite_j).$$

\begin{definition}[$\XOS$]\label{def:XOS}
	A valuation function $\valu_i(\cdot)$ is fractionally subadditive ($\XOS$),  if there exists a family of additive valuation functions $u_{i,1},u_{i,2}, \ldots, u_{i,\ell} : 2^\items \rightarrow \mathbb{R}_{\geq 0}$ such that for every set $S$ we have 
	$$
	\valu_i(S) = \max_{1 \leq k \leq \ell} u_{i,k}(S).
	$$
\end{definition}

Given an allocation $\allocation$, we denote by $u_{i,i'}$, an additive function of $\valu_i$ that defines $\valu_{i}(\allocation_i)$, i.e., $\valu_{i}(\allocation_i) = u_{i,i'}(\allocation_i)$.
Another term we frequently use in this paper is contribution; which is defined as the marginal value of one set to another.

\begin{definition}[Contribution]\label{def:contribution}
    For every sets $S, T$ of items such that $S \subseteq T$, we define the marginal contribution of $S$ to $T$ with respect to valuation function $v$, denoted by $C_v^T(S)$, as follows:

$$
	C_v^T(S) = v(T) - v(T \setminus S)
$$
i.e., the marginal contribution of $S$ to $T$ is the value decrease when $S$ is removed from $T$.
\end{definition}

\begin{example}
	Consider 5 items $\ite_1, \ite_2,\ldots, \ite_5$, and an identical valuation $\valu$ for all the agents. Suppose that $\valu$ is a fractionally subadditive function consisting of two additive function $u_1 = [2,8,4,5,1]$ and $u_2 = [5,1,9,4,5]$ 
 (the $j^{\textrm{th}}$ element is the value for $\ite_j$). For set $S= \{\ite_1,\ite_2,\ldots,\ite_5\}$ we have $u_1(S) = 20$ and $u_2(S) = 24$. Hence, $\valu(S) = \max(u_1(S),u_2(S)) = 24$. Also, the marginal contribution of item $\ite_3$ to set $S$ is 
	$C_v^S(\{\ite_3\}) = \valu(S) - \valu(S \setminus \{\ite_3\}) = 24 - 16 = 8$, which is smaller than $u_2(\ite_3) = 9$.
\end{example}
With abuse of notation, for an allocation $\allocation$ of items to agents with valuation vector $\mathcal{V}=(v_1, \ldots, v_n)$ and every set $S$ of items, we define the contribution of $S$ to $\allocation$ with respect to $\mathcal{V}$, denoted by $C_\mathcal{V}^\allocation(S)$ as follows:
\begin{equation}\label{eq:3}
	C_\mathcal{V}^{\allocation}(S) = \sum_{1\leq i \leq n} C_{v_i}^{\allocation_i}(\allocation_i \cap S). 
\end{equation}

However, since an $\XOS$ valuation function might include many additive functions,  Equality \eqref{eq:3} is not always practical. Therefore, we use Observation \ref{eq:2} to bound $C_\mathcal{V}^{\allocation}(S)$. 
Since $u_{i,i'}(\allocation_i) = u_{i,i'}(\allocation_i \setminus S) + u_{i,i'}(\allocation_i \cap S)$ and we have $v_i(\allocation_i) - v_i(\allocation_i \setminus S) \le  u_{i,i'}(\allocation_i) - u_{i,i'}(\allocation_i \setminus S) = u_{i,i'}(\allocation_i \cap S)$. Summing over $i$ yields the following Observation.

\begin{observation}\label{eq:2}
For every allocation $\allocation$ of items to agents with valuation vector $\mathcal{V}$ and every set $S$ of items, we have
	$$C_\mathcal{V}^{\allocation}(S) \leq \sum_{1 \leq i \leq n} u_{i,i'}(\allocation_i \cap S).$$
\end{observation}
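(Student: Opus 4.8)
The plan is to reduce the claimed inequality to a per-agent statement and then sum. Since by definition (Equation \eqref{eq:3}) we have $C_\mathcal{V}^{\allocation}(S) = \sum_{1 \le i \le n} C_{v_i}^{\allocation_i}(\allocation_i \cap S)$, it suffices to prove, for each agent $i$ in isolation, the bound $C_{v_i}^{\allocation_i}(\allocation_i \cap S) \le u_{i,i'}(\allocation_i \cap S)$; adding the $n$ resulting inequalities then gives the stated conclusion directly.

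Fixing an agent $i$, I would first unfold the definition of contribution with $T = \allocation_i$ and removed set $\allocation_i \cap S$, noting that $\allocation_i \setminus (\allocation_i \cap S) = \allocation_i \setminus S$, so that $C_{v_i}^{\allocation_i}(\allocation_i \cap S) = v_i(\allocation_i) - v_i(\allocation_i \setminus S)$. Here $u_{i,i'}$ is, by its defining property, the additive function in the $\XOS$ representation of $v_i$ that is tight on the whole bundle, i.e.\ $v_i(\allocation_i) = u_{i,i'}(\allocation_i)$, which I substitute for the first term.

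The crux is the asymmetric use of the $\XOS$ maximum property. On the full bundle I exploit the equality above, whereas on the residual bundle $\allocation_i \setminus S$ I use only that $v_i$ is the pointwise maximum over the additive functions $u_{i,1}, \ldots, u_{i,\ell}$, so in particular $v_i(\allocation_i \setminus S) \ge u_{i,i'}(\allocation_i \setminus S)$. Because this residual term enters with a minus sign, the lower bound on it becomes an upper bound on the contribution: $C_{v_i}^{\allocation_i}(\allocation_i \cap S) \le u_{i,i'}(\allocation_i) - u_{i,i'}(\allocation_i \setminus S)$. Finally, additivity of $u_{i,i'}$ applied to the disjoint decomposition $\allocation_i = (\allocation_i \setminus S) \sqcup (\allocation_i \cap S)$ yields $u_{i,i'}(\allocation_i) - u_{i,i'}(\allocation_i \setminus S) = u_{i,i'}(\allocation_i \cap S)$, which is exactly the per-agent bound I set out to prove.

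There is no real computational obstacle here; the single point that requires care is recognizing which direction of the $\XOS$ inequality to invoke and where, namely equality on $\allocation_i$ (since $i'$ is chosen as its maximizer) versus a one-sided inequality on the residual $\allocation_i \setminus S$ (where $u_{i,i'}$ need not be the maximizer). Getting this asymmetry right is what makes the residual term bound in the correct direction, and summing the per-agent inequalities over $i$ closes the argument.
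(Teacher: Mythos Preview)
Your argument is correct and is essentially identical to the paper's own proof: both fix an agent, use $v_i(\allocation_i) = u_{i,i'}(\allocation_i)$ together with the $\XOS$ lower bound $v_i(\allocation_i \setminus S) \ge u_{i,i'}(\allocation_i \setminus S)$, apply additivity of $u_{i,i'}$ on the disjoint split $\allocation_i = (\allocation_i \setminus S) \cup (\allocation_i \cap S)$, and then sum over $i$.
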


Our goal is to allocate the items to the agents in a fair manner.  Here we discuss two share-based notions of fairness, namely proportionality, and maximin-share.
For agent $i$, we define the proportional share of agent $i$, denoted by $\pi_i$ as:
$$
\pi_i = \valu_i(\items)/n.
$$
We also define the maximin-share of agent $i$, denoted by $\MMS_i$ as 
$$
\MMS_i = \max_{\langle \Pi_1,\Pi_2,\ldots, \Pi_n\rangle \in \Omega}\, \min_{1 \leq j \leq n} \valu_i(\Pi_j),
$$
where $\Omega$ is the set of all partitions of $\items$ into $n$ bundles. 
Moreover, if for a partition $\Pi = \langle \Pi_1,\Pi_2,\ldots, \Pi_n\rangle$ of $\items$ into $n$ bundles we have $v_i(\Pi_j) \geq \MMS_i$ for all $j \in [n]$, we say $\Pi$ is an ``$\MMS$ partition'' or ``optimal partition'' of agent $i$. By definition of $\MMS$, for every agent $i$ there exists at least one partition $\Pi$ which is an $\MMS$ partition for agent $i$. For brevity, in the rest of the paper, we assume that the valuations are scaled so that for every agent $i$, we have $\MMS_i = 1$. 
We also define approximate versions of these two notions as follows. For any $\alpha>0$, we say an allocation is $\alpha$-proportional, if it guarantees to each agent $i$ a bundle with value at least $\alpha \pi_i$.  Likewise, in an $\alpha$-$\MMS$ allocation, the value of the share allocated to each agent $i$ is at least $\alpha$. 

\paragraph*{Randomized allocation.} In this paper, we also consider randomized allocations. A randomized allocation is a distribution over a set of deterministic allocations. For a randomized allocation $\ralloc$, we denote by $\dom(\ralloc)$ the set of allocations in the support of $\ralloc$. 
For a randomized allocation $\ralloc$, the expected welfare of agent $i$ is defined as 
$$
	\valu_i(\ralloc) = \sum_{\allocation \in \dom(\ralloc)} \valu_i(\allocation_i) \cdot p_\allocation,
$$
where $p_\allocation$ is the probability of allocation $\allocation$ in $\ralloc$.

\paragraph*{Fractional allocation.}En route to proving our results, we leverage another relaxed form of allocation called fractional allocation. In a fractional allocation, we ignore the indivisibility assumption and treat each item as a divisible one. Formally, a fractional allocation $\falloc$ is a set of $nm$ variables $f_{i,j}$ indicating the fraction of item $\ite_j$ allocated to agent $i$. Therefore, we expect allocation $\falloc$ to satisfy the following constraints:

\begin{align*}
	&\forall_{j}  \qquad \sum_{1\leq i \leq n} f_{i,j} \leq 1 &\mbox{(we have one unit of each item)}\\
	&\forall_{i,j} \qquad 0 \leq f_{i,j}  &\mbox{(each agent receives a non-negative share of each item)} 
\end{align*}
A fractional allocation is \emph{complete} if $\sum_i f_{i,j} = 1$ for all items $\ite_j$, i.e., all items are completely allocated.
Given a fractional allocation $\falloc$, we define the utility of agent $i$ for $\falloc$ in the same way as we calculate it for integral allocations: 
$$
\valu_{i}(\falloc) = \max_k \sum_{1 \leq j \leq m}u_{i,k}(\ite_j)f_{i,j}.
$$
Complete fractional allocations give rise to randomized allocations in the standard way, i.e., the probability $p_\allocation$ of an allocation $\allocation$ is defined as
\begin{equation}\label{frac-to-rand} p_\allocation = \prod_{1\leq i \leq n} \prod_{\ite_j \in \allocation_i} f_{ij}. \end{equation}
Then $\sum_\allocation p_\allocation = 1$. Indeed, view an integral allocation $\allocation$ as a mapping $\pi$ from $[m]$ to $[n]$: $\pi(j) = i$ iff $\ite_j \in \allocation_i$. Then $\sum_\allocation p_\allocation = \sum_\allocation \prod_i \prod_{\ite_j \in \allocation_i} f_{ij} = \sum_{\pi \in [n]^{[m]}} \prod_j f_{\pi(j) j} = \prod_j (\sum_i f_{ij}) = \prod_j 1 = 1$.

\begin{lemma}\label{randomized versus fractional} Let $\falloc$ be a complete fractional allocation and let randomized allocation $\ralloc$ be defined by (\ref{frac-to-rand}). Then for $\XOS$ valuation functions $v_i$,
  \[     v_i(\ralloc) \ge v_i(\falloc) \]
  for all $1 \leq i \leq n$. 
\end{lemma}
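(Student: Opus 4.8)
The plan is to fix an agent $i$ and reduce the $\XOS$ comparison to a single additive function. Let $u_{i,k^*}$ be an additive function attaining the maximum in the definition of the fractional value, so that
\[
v_i(\falloc) = \max_k \sum_{1\le j\le m} u_{i,k}(\ite_j)\,f_{ij} = \sum_{1\le j\le m} u_{i,k^*}(\ite_j)\,f_{ij}.
\]
This is the only place the ``max'' in the $\XOS$ definition is resolved, and it is resolved on the fractional side. The point of the whole argument is to keep this one additive function fixed and use it as a uniform lower bound on the random side, rather than trying to track the (possibly different) maximizing index for each realized allocation.

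Next I would describe $\ralloc$ as a product distribution. By the computation immediately preceding the lemma, the distribution defined by (\ref{frac-to-rand}) is exactly the one in which each item $\ite_j$ is placed into $\allocation_i$ independently across items, with $\ite_j$ going to agent $i$ with probability $f_{ij}$ (completeness, $\sum_i f_{ij}=1$, is what makes these legitimate probabilities). In particular the marginal event ``$\ite_j \in \allocation_i$'' has probability exactly $f_{ij}$ under $\allocation \sim \ralloc$, regardless of where the other items land.

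The key step is a pointwise lower bound followed by linearity of expectation. For every allocation $\allocation$ in $\dom(\ralloc)$, the $\XOS$ definition gives $v_i(\allocation_i)=\max_k u_{i,k}(\allocation_i)\ge u_{i,k^*}(\allocation_i)$. Taking expectations over $\allocation\sim\ralloc$, and using that $u_{i,k^*}$ is additive so that $u_{i,k^*}(\allocation_i)=\sum_{\ite_j\in\allocation_i} u_{i,k^*}(\ite_j)$,
\begin{align*}
v_i(\ralloc) = \E_{\allocation\sim\ralloc}\big[v_i(\allocation_i)\big] &\ge \E_{\allocation\sim\ralloc}\big[u_{i,k^*}(\allocation_i)\big] \\
&= \sum_{1\le j\le m} u_{i,k^*}(\ite_j)\,\Pr_{\allocation\sim\ralloc}[\ite_j\in\allocation_i] \\
&= \sum_{1\le j\le m} u_{i,k^*}(\ite_j)\,f_{ij} = v_i(\falloc),
\end{align*}
which is the claim for agent $i$, and since $i$ was arbitrary the lemma follows.

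I do not expect a genuine obstacle here; the subtlety is purely conceptual and lies in the direction of the inequality. One must resist trying to match the maximizing additive index realization-by-realization (which would go the wrong way); instead, fixing the single additive function that is optimal \emph{fractionally} and dominating every realized $v_i(\allocation_i)$ by it turns the problem into a linearity-of-expectation computation, where the marginals of the product distribution are exactly the $f_{ij}$. Note also that additivity is essential in the second line: it is what lets expectation pass through the sum over items, and it is precisely why the inequality can be tight for additive valuations while being a genuine gain for general $\XOS$ ones.
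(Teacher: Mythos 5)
Your proposal is correct and is essentially the paper's proof: the paper also fixes the additive function $u_{i,i'}$ that is optimal for the fractional allocation, lower-bounds $v_i(\allocation_i)$ by $u_{i,i'}(\allocation_i)$ pointwise, and then carries out by explicit sum manipulations over mappings $\pi \in [n]^{[m]}$ exactly the linearity-of-expectation computation you phrase via the product-distribution marginals $\Pr[\ite_j \in \allocation_i] = f_{ij}$. The only difference is presentational, not mathematical.
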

\begin{proof} For each $i$, let $i'$ be such that $u_{i,i'}(\falloc) = \max_k u_{i,k}(\falloc)$. We view again an allocation $\allocation$ as a mapping $\pi \in [n]^{[m]}$. Then 
  \begin{align*}
    v_i(\ralloc) &= \sum_\allocation v_i(\allocation_i) p_\allocation \\
                        &= \sum_\pi v_i(\pi^{-1}(i)) \prod_t f_{\pi(t),t} \\
                        &\ge \sum_\pi \sum_{j;\ \pi(j) = i} u_{i,i'}(j) \prod_t f_{\pi(t),t}\\
                        &= \sum_j \sum_{\pi;\ \pi(j) = i} u_{i,i'}(j) f_{i,j} \prod_{t \not= j}  f_{\pi(t),t}\\
                        &= \sum_j  u_{i,i'}(j) f_{i,j}  \sum_{\pi \in [n]^{[m] \setminus j}} \prod_{t \not= j} f_{\pi(t),t}\\
                        &= \sum_j  u_{i,i'}(j) f_{i,j}  \prod_{t \not= j} \sum_\ell f_{\ell,t}\\
                        &= \sum_j  u_{i,i'}(j) f_{i,j}  \prod_{t \not= j} 1 \\
                        &= \sum_j  u_{i,i'}(j) f_{i,j} \\
                        &= v_i(\falloc).
  \end{align*}\end{proof}

We also need to redefine \emph{contribution} for fractional allocations and fractional bundles. Suppose that $\falloc$ is a fractional allocation and $S$ is a fractional set of items.
Since items are fractionally allocated, the term contribution must be defined more precisely. For example, suppose that set $S$ consists of a fraction $0.4$ of item $\ite_j$, and in allocation $\falloc$,  $0.2$ of $\ite_j$ belongs to agent $i_1$, $0.5$ of $\ite_{j}$ belongs to agent  $i_2$, and $0.3$ of $\ite_j$ belongs to agent $i_3$. We need to define exactly how the $0.4$ fraction of item $\ite_j$ in $S$ is distributed over the agents. One reasonable strategy is to choose the share of each agent in a way that after removal of $S$ from $\falloc$ we have the smallest possible decrease in the social welfare. Based on this strategy, assuming that $s_{j}$ is the fraction of item $\ite_{j}$ in $S$, we define the contribution of $S$ to $\falloc$, denoted by $C_\mathcal{V}^{\falloc}(S)$ as the answer of the following optimization program:
\begin{align}
	\text{minimize }   &\sum_{1 \leq i \leq n}    \valu_i(\falloc) - \valu_i(\falloc')          \nonumber  \\
	\text{subject to } &  \sum_{1 \leq i \leq n} f_{i,j} -f'_{i,j} = s_{j}& \forall_j \nonumber \\
	&0 \leq  f'_{i,j} \leq f_{i,j}& \forall_{i,j} \label{eq:contr} 
\end{align}
Generally, it is hard to deal with the above optimization program. Here, we use an important property of $C^{\falloc}_\mathcal{V}(\cdot)$ to obtain our results. 
\begin{lemma}\label{lem:supper-additive}
	Let $\falloc$ be an arbitrary fractional allocation and assume that for every agent $i$, $\valu_i(\cdot)$ is $\XOS$. Then, for every partition of the items into fractional sets $S_1,S_2,\ldots,S_t$, we have $$\sum_{1 \leq k \leq t} C_\mathcal{V}^{\falloc}(S_k) \leq \sum_{1 \leq i \leq n} \valu_i(\falloc).$$
\end{lemma}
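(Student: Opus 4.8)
The plan is to exploit the $\XOS$ structure by freezing, for each agent $i$, one additive function that certifies her value for the full allocation $\falloc$, and then to bound every contribution $C_\mathcal{V}^\falloc(S_k)$ from above by a \emph{single globally consistent} removal scheme rather than by each set's own optimal removal. Concretely, for each agent $i$ I would fix an index $i'$ with $v_i(\falloc) = u_{i,i'}(\falloc) = \sum_j u_{i,i'}(\ite_j) f_{i,j}$. Since each $C_\mathcal{V}^\falloc(S_k)$ is defined as a \emph{minimum} over feasible removals (Program~\eqref{eq:contr}), any feasible post-removal allocation $\falloc^{(k)}$ with variables $f^{(k)}_{i,j}$ furnishes an upper bound $C_\mathcal{V}^\falloc(S_k) \le \sum_i \big(v_i(\falloc) - v_i(\falloc^{(k)})\big)$; combining $v_i(\falloc^{(k)}) \ge u_{i,i'}(\falloc^{(k)})$ with $v_i(\falloc) = u_{i,i'}(\falloc)$ then replaces the true values by the frozen additive ones.

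The crux is to choose the allocations $\falloc^{(k)}$ so that, summed over $k$, they remove from each agent exactly what she holds and no more. First I would record what ``partition into fractional sets'' means: writing $(s_k)_j$ for the amount of item $\ite_j$ placed in $S_k$, we have $\sum_k (s_k)_j = \sum_i f_{i,j}$ for every $j$. For each item $\ite_j$ separately this is a balanced transportation instance with supplies $\{f_{i,j}\}_i$ and demands $\{(s_k)_j\}_k$ of equal total, so there exists a nonnegative array $g^{(k)}_{i,j}$ with row sums $\sum_k g^{(k)}_{i,j} = f_{i,j}$ and column sums $\sum_i g^{(k)}_{i,j} = (s_k)_j$. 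Setting $f^{(k)}_{i,j} = f_{i,j} - g^{(k)}_{i,j}$ gives, for each $k$, a feasible solution of the contribution program for $S_k$: indeed $0 \le f^{(k)}_{i,j} \le f_{i,j}$ since $g^{(k)}_{i,j} \le \sum_k g^{(k)}_{i,j} = f_{i,j}$, and $\sum_i (f_{i,j} - f^{(k)}_{i,j}) = (s_k)_j$ as required.

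With these removals in hand the bound is a short computation: for each $k$,
\[
  C_\mathcal{V}^\falloc(S_k) \le \sum_i \big(v_i(\falloc) - v_i(\falloc^{(k)})\big) \le \sum_i \big(u_{i,i'}(\falloc) - u_{i,i'}(\falloc^{(k)})\big) = \sum_i \sum_j u_{i,i'}(\ite_j)\, g^{(k)}_{i,j}.
\]
Summing over $k$ and invoking the row-sum identity $\sum_k g^{(k)}_{i,j} = f_{i,j}$ telescopes the right-hand side into $\sum_i \sum_j u_{i,i'}(\ite_j) f_{i,j} = \sum_i u_{i,i'}(\falloc) = \sum_i v_i(\falloc)$, which is precisely the claimed inequality.

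The step I expect to be the main obstacle is the consistency of the removal scheme. Bounding each $C_\mathcal{V}^\falloc(S_k)$ by its own optimal removal fails, because independent optimal removals can collectively strip more than $f_{i,j}$ from a single agent--item pair and thereby double-count the available welfare. The balanced transportation argument---feasible exactly because the $S_k$ partition the allocated items, so supplies and demands have equal total---is what rules out this overcounting and forces the additive bounds to sum to $\sum_i v_i(\falloc)$ and no more.
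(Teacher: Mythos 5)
Your proof is correct and follows essentially the same route as the paper's: fix one maximizing additive function $u_{i,i'}$ per agent, exhibit for each $S_k$ a feasible solution of the contribution program (so that the minimum is upper-bounded), bound the welfare loss by the frozen additive values, and telescope using the fact that the joint removals exactly exhaust each $f_{i,j}$. The paper merely instantiates your transportation array explicitly via the proportional split $g^{(k)}_{i,j} = s_{k,j} f_{i,j}$ (feasible since $\sum_k s_{k,j} = 1$ for a complete allocation), so your appeal to a generic balanced transportation solution is a mild abstraction of the very same construction.
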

\begin{proof} 
For every agent $i$, let $i' =\arg \max_k u_{i,k}(\falloc)$. 
By definition, we have
	$$\valu_{i}(\falloc) = \sum_{1 \leq j \leq m} u_{i,i'}(\ite_j)\cdot f_{i,j}.$$
 
We will define for each set $S_k$ an allocation $\falloc^{(k)}$ by reducing the allocation $\falloc$ proportionally, i.e., we will replace $f_{ij}$ by $(1 - s_{k,j})f_{ij}$, where $s_{k,j}$ is the fraction of item $j$ belonging to set $S_k$. Note that since $S_1,S_2,\ldots,S_t$ is a partition of items, $\sum_k s_{k,j} = 1$. For every $1 \leq k \leq t$, $1\leq i \leq n$, and $1 \leq j\leq m$, define variable $f'^{(k)}_{i,j}$ as follows:
	$$
	f'^{(k)}_{i,j}= (1-s_{k,j}) \cdot f_{i,j}.
	$$
	Denote by $\falloc'^{(k)}$ the partial allocation defined by the variables $f'^{(k)}_{i,j}$. Since for every $j$, we have
	\begin{align*}
		\sum_{1 \leq i \leq n} (f_{i,j} - f'^{(k)}_{i,j}) &= \sum_{1 \leq i \leq n} s_{k,j}f_{i,j}\\ &= s_{k,j},
	\end{align*}
	$\falloc'^{(k)}$ is a feasible solution to Program \eqref{eq:contr}. Therefore, we have
	\begin{align*}
		\sum_{1 \leq k \leq t} C^{\falloc}_\mathcal{V}(S_k) &\le \sum_{1 \leq k \leq t} \sum_{1 \leq i \leq n} (v_i(\falloc) - v_i(\falloc^{'(k)})) \\
		&\le \sum_{1 \leq k \leq t} \sum_{1 \leq i \leq n} \sum_{1 \leq j \leq m} u_{i, i'}(\ite_j) (f_{i,j} - f_{i,j}^{'(k)}) \\
		& =  \sum_{1 \leq i \leq n} \sum_{1 \leq j \leq m}\sum_{1 \leq k \leq t} u_{i, i'}(\ite_j) s_{k,j} f_{i,j} \\
		&=  \sum_{1 \leq i \leq n} \sum_{1 \leq j \leq m} u_{i, i'}(\ite_j) f_{i,j}\\
		&= \sum_{1 \leq i \leq n} \valu_i(\falloc).
	\end{align*}
\end{proof}
\subsection{Ex-ante and Ex-post Fairness Guarantees.}
For a randomized allocation, we define two types of fairness guarantees, namely \emph{ex-ante} and \emph{ex-post} as in Definitions \ref{def:2} and \ref{def:3}. 
\begin{definition}[ex-ante]\label{def:2}
	Given a randomized allocation $\ralloc$, we say $\ralloc$ is $\alpha$-$\MMS$ \emph{ex-ante}, if for every agent $i$, we have $\valu_i(\ralloc)\geq \alpha $. Similarly, $\ralloc$ is $\alpha$-proportional, if for every agent $i$, $\valu_i(\ralloc)\geq \alpha \pi_i$. 
\end{definition}

\begin{definition}[ex-post]\label{def:3}
	An allocation $\ralloc$ is $\alpha$-$\MMS$ \emph{ex-post}, if every allocation $\allocation \in \domain(\ralloc)$ is $\alpha$-$\MMS$. Similarly, we say $\ralloc$ is $\alpha$-proportional ex-post if every allocation $\allocation \in \domain(\ralloc)$ is $\alpha$-proportional.
\end{definition}

One tool that we refer to in this paper is the result of Babaioff, Ezra, and Feige for converting a fractional allocation into a faithful randomized allocation \cite{babaioff2022best}. 

\begin{theorem}[Proved in \cite{babaioff2022best}]\label{thm:fei}
	Assume that the valuations are additive and let $\falloc$ be a fractional allocation. Then there exists a randomized allocation 
	$\ralloc$ such that the ex-ante utility of the agents  for $\ralloc$ is the same as the utility of the agents in $\falloc$, and for every allocation $\allocation$ in the support of $\ralloc$ the following holds: 
	$$
	\forall_i, \valu_i(\allocation_i) \geq \valu_i(\ralloc) - \max_{j: f_{i,j} \notin \{0,1\}} \valu_i(\ite_j).
	$$
\end{theorem}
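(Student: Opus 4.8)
The plan is to realize the (w.l.o.g.\ complete) fractional allocation $\falloc$ as a distribution over integral allocations obtained by a \emph{dependent} rounding of the bipartite assignment; the whole content of the theorem is to make this rounding structured enough that every allocation in its support — not merely the average — is good. First I would discard the integrally assigned items: if $f_{i,j}\in\{0,1\}$ then $\ite_j$ goes to the same agent in every outcome, so it contributes the same constant to both $v_i(\allocation_i)$ and $v_i(\falloc)$ and cancels from the claimed inequality, and it is irrelevant to the $\max_{j:\,f_{i,j}\notin\{0,1\}}$ term. Hence I may assume every item is fractional, i.e.\ $0<f_{i,j}<1$ for at least two agents, and build the bipartite support graph $H$ whose edges are the pairs $(i,j)$ with $0<f_{i,j}<1$; each item node then has degree at least $2$, since $\sum_i f_{i,j}=1$.

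Second, I would construct the distribution by iterative rounding on $H$. While a fractional edge remains, locate either a cycle or (when the current support is a forest) a path joining two agent-leaves, and perturb the current fractional allocation by shifting mass $\pm\epsilon$ alternately along this walk. Because every item on the walk is incident to exactly two walk-edges carrying opposite signs, each column sum $\sum_i f_{i,j}$ is preserved, and choosing $\epsilon$ maximal drives at least one edge to $0$ or $1$. Writing the current allocation as a convex combination $\lambda\falloc^{+}+(1-\lambda)\falloc^{-}$ of the two extreme perturbations, with $\lambda$ fixed so that the mean is the current allocation, and recursing on each branch, I obtain a binary recursion tree whose leaves are integral; these leaves, weighted by the products of the $\lambda$'s along each root-to-leaf path, define the randomized allocation $\ralloc$.

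Third, the ex-ante guarantee is then immediate. Each elementary shift is mean-preserving and keeps every column sum equal to $1$, so by induction the marginal probability that $\ite_j$ is assigned to agent $i$ equals $f_{i,j}$ throughout the recursion. Since the $v_i$ are additive, this gives $v_i(\ralloc)=\sum_j v_i(\ite_j)\Pr[\ite_j\in\allocation_i]=\sum_j v_i(\ite_j)f_{i,j}=v_i(\falloc)$ for every agent, which is the first assertion.

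The hard part, and where the real work lies, is the ex-post inequality, which must hold at \emph{every} leaf of the recursion rather than in expectation. This is precisely where a generic decomposition fails: independent rounding of the items already preserves all marginals, yet it can place in the support an allocation giving some agent far less than $v_i(\falloc)-\max_{j:\,0<f_{i,j}<1}v_i(\ite_j)$ — for instance, pooling four equal half-assigned items can produce an outcome where an agent receives none of them — so the \emph{structure} of the rounding is essential and not just its marginals. To control the per-outcome deficit I would track, for each agent $i$ along a fixed branch, the natural invariant that her committed value plus her current fractional value never drops below $v_i(\falloc)$ by more than the value of a single still-fractional item incident to her: a single elementary shift moves $v_i$ by at most $\epsilon\,\lvert v_i(\ite_{j_1})-v_i(\ite_{j_2})\rvert$ because $i$ touches at most two edges of any cycle or path, and I would impose an ordering discipline (committing $i$'s edges through the walk carrying her currently largest fractional item, and never reopening an edge once set to $0$ or $1$) so that the cumulative downward drift telescopes and is charged entirely to one fractional item of $i$. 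Proving that this charging never exceeds $\max_{j:\,0<f_{i,j}<1}v_i(\ite_j)$ in every branch is the crux; once it is established, the bound $v_i(\allocation_i)\ge v_i(\ralloc)-\max_{j:\,f_{i,j}\notin\{0,1\}}v_i(\ite_j)$ holds for every $\allocation\in\dom(\ralloc)$, completing the proof.
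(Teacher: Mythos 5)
Your first three steps are sound: discarding integrally assigned items, the cycle/path-canceling decomposition itself, and the marginal-preservation argument giving $v_i(\ralloc)=v_i(\falloc)$ for additive $v_i$ are all correct. But the proposal does not prove the theorem, and you say so yourself: the ex-post inequality, which is the entire content of the statement, is reduced to an ``ordering discipline'' whose key property (``proving that this charging never exceeds $\max_{j:\,0<f_{i,j}<1}v_i(\ite_j)$ in every branch is the crux; once it is established\ldots'') is asserted, not established. And it is not a routine verification that was merely omitted: generic cycle canceling does not support such a discipline. The walks available at each step are dictated by the global support graph, not by any per-agent preference, so you cannot in general ``commit $i$'s edges through the walk carrying her currently largest fractional item''; several agents share each walk and their orderings conflict. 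Moreover, a maximal-$\epsilon$ shift can, in one branch, drive an agent's edge to her most valuable fractional item to $0$ while her compensating mass lands on items that later collapse adversely in turn; nothing in the sketch prevents the per-branch deficit from accumulating across iterations to more than one item's value. Indeed, your own counterexample against independent rounding (four half-assigned equal items) also defeats an unstructured dependent rounding unless the decomposition is constrained in advance.

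The standard repair --- and this is how the cited result of Babaioff, Ezra and Feige is actually obtained, and how the present paper handles the half-integral case in Lemma~\ref{lem:fractorand} --- is to build the structure \emph{into the bipartite graph before decomposing}, rather than hoping an ordering on the cancellation steps telescopes. Split each agent $i$ into $\lceil\sum_j f_{i,j}\rceil$ slots (after padding with dummy items so the fractional degrees are integral), assign $i$'s fractionally held items to slots in decreasing order of $v_i$ so that each slot carries total fraction exactly one, and decompose the resulting fractional perfect matching into integral perfect matchings by Birkhoff--von Neumann (equivalently, the bihierarchy theorem of Budish, Che, Kojima and Milgrom). In every integral matching of the decomposition, agent $i$ receives exactly one item per slot, and since each item placed in slot $k$ is at least as valuable to $i$ as every item in slot $k+1$, the realized value is at least the fractional value minus the value of $i$'s single largest fractionally held item --- uniformly over the support, with no charging argument over a recursion tree. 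This slot device is precisely the missing idea in your plan; compare the vertices $x_i^1,\ldots,x_i^{t_i}$ with value-sorted adjacency in the paper's proof of Lemma~\ref{lem:fractorand}, which is the degree-two instance of the same construction and even improves the loss to half the largest fractional item there.
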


In this paper, we use a more delicate analysis of the method used in Theorem \ref{thm:fei} to convert fractional allocations into randomized ones. This helps us improve our ex-post approximation guarantee.
\section{Ex-ante Guarantees}\label{sec:exante}
In this section, our goal is to explore the possibility of designing randomized allocations that is $\alpha$-$\MMS$ ex-ante or $\alpha$-proportional ex-ante.  Note that, in contrast to the additive case, for $\XOS$ valuations there is no meaningful correspondence between proportionality and maximin-share; proportional share can be larger or smaller than maximin-share. Recall that for the additive case, we always have $\pi_i\geq \MMS_i$ and therefore, maximin-share is implied by proportionality. However, for fractionally subadditive valuations, $\pi_i$ can be as small as $\MMS_i/n$. 

For the additive setting, a simple fractional allocation that allocates a fraction $1/n$ of each item to each agent guarantees proportionality and consequently maximin-share. Using Theorem \ref{thm:fei} one can convert this allocation to a randomized allocation that is proportional ex-ante. In Observation $\ref{obs:3}$ we show that proportionality can be guaranteed ex-ante for $\XOS$ valuations.\footnote{Note that for now, we are not concerned about the ex-post guarantee of our allocation.}

\begin{observation}\label{obs:3}
	Every randomized allocation that allocates each item with probability $1/n$ to each agent is proportional ex-ante.
\end{observation}
\begin{proof}
	Let $\ralloc$ be a randomized allocation such that the probability that item $\ite_j$ is allocated to agent $i$ is $1/n$. Also, let $u_{i,i'}$ be the additive valuation function that defines $\valu_i(\items)$, i.e., $v_i(\items) = u_{i,i'}(\items)$, and let $p_\allocation$ be the probability that allocation $\allocation$ is chosen in $\ralloc$. We have 
	\begin{align*}
		\valu_{i}(\ralloc) &= \sum_{\allocation \in \domain(\ralloc)} p_\allocation \valu_{i}(\allocation_i)\\
		&\geq \sum_{\allocation \in \domain(\ralloc)} p_\allocation u_{i,i'}(\allocation_i)\\		&= \sum_{\allocation \in \domain(\ralloc)} p_\allocation\sum_{\ite_j \in \allocation_i} u_{i,i'}(\ite_j)\\
		&= \sum_{1 \leq j \leq m}\sum_{\allocation_i \ni \ite_j} u_{i,i'}(\ite_j)p_\allocation\\
		&= \sum_{1 \leq j \leq m} u_{i,i'}(\ite_j)/n\\
		&=\valu_i(\items)/n.
	\end{align*}
\end{proof}

In contrast to the additive setting, finding an allocation that guarantees maximin-share is not trivial. Indeed, the simple fractional allocation that guarantees proportionality in Observation \ref{obs:3} can be as bad as $O(1/n)$-$\MMS$. 

\begin{observation}\label{obs:2}
	Let $\falloc$ be a fractional allocation that allocates a fraction $1/n$ of each item to each agent. Then, there exists an instance such that the maximin-share guarantee of $\falloc$ is $O(1/n)$.  
\end{observation}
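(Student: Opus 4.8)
The plan is to exhibit a single concrete \XOS{} instance in which the uniform fractional allocation delivers each agent only a $1/n$ fraction of her maximin-share. The starting point is to note that for the fractional allocation $\falloc$ with $f_{i,j} = 1/n$ for all $i,j$, the fractional utility collapses to the proportional share: since
$$
v_i(\falloc) = \max_k \sum_j u_{i,k}(\ite_j)\, f_{i,j} = \frac1n \max_k u_{i,k}(\items) = \frac1n\, v_i(\items) = \pi_i,
$$
it suffices to produce an instance where $\pi_i = O(1/n)\cdot \MMS_i$. Such a gap is impossible for additive valuations (where $\pi_i \ge \MMS_i$), so the construction must exploit the strict subadditivity permitted by \XOS{}; this is exactly the phenomenon already flagged in the text just before the statement.

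Concretely, I would take $n$ agents with identical unit-demand valuations over a ground set of exactly $n$ items, each item worth $1$: define $v(S) = \mathbb{1}[S \neq \emptyset]$. This is \XOS{}, witnessed by the $n$ additive clauses $u_k(\ite_j) = \mathbb{1}[j=k]$, for which $\max_k u_k(S) = \mathbb{1}[S \neq \emptyset] = v(S)$. For this instance $v_i(\items) = 1$, so the proportional share is $\pi_i = 1/n$, whereas the partition of $\items$ into its $n$ singletons gives every bundle value $1$; since no bundle can have value exceeding $v_i(\items) = 1$, this yields $\MMS_i = 1$, matching the normalization $\MMS_i = 1$ used throughout.

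Finally, I would evaluate $\falloc$ directly: $v_i(\falloc) = \max_k u_k(\ite_k)\cdot \tfrac1n = \tfrac1n$, which equals $\pi_i$ as predicted above. Hence each agent receives value $1/n = (1/n)\,\MMS_i$, so the maximin-share guarantee of $\falloc$ on this instance is exactly $1/n = O(1/n)$, as claimed. There is no genuinely hard step here; the only points needing care are verifying that the chosen $v$ is \XOS{} rather than merely subadditive (handled by the explicit additive clauses) and confirming that its maximin-share is $1$ while its value for the whole set is only $1$ — precisely the subadditive collapse that separates $\MMS_i$ from $\pi_i$ and drives the bound.
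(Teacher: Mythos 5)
Your proof is correct and takes essentially the same route as the paper: both exhibit an explicit $\XOS$ instance with identical valuations whose additive clauses are supported on disjoint blocks of the items, so that $v_i(\items)=\MMS_i=1$ while the uniform fractional allocation yields only $1/n$. The paper's instance has $n^2$ items in $n$ blocks of $n$ items of value $1/n$ each, and your unit-demand instance is just the block-size-one version of the same construction (with the added, correct, remark that the uniform allocation's value always equals the proportional share), so the difference is cosmetic rather than substantive.
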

\begin{proof}
	Consider the following instance: there are $n^2$ items. The valuation of agent $i$ is an $\XOS$ set function consisting of $n$ additive valuation functions as follows: partition the items into $n$ bundles each with $n$ items. For each additive function $u_{i,k}$, the value of each item in the $k^{\textrm{th}}$ bundle is $1/n$ and the value of the rest of the items is $0$. It is easy to observe that for this instance, the $\MMS$ value of each agent is $1$, and the value of each agent for her bundle in $\falloc$ is $1/n$.
\end{proof}

Generally, there are two main challenges in the process of designing a randomized allocation that guarantees an approximation of maximin-share. In contrast to the additive setting, finding a fractional or randomized allocation that approximates maximin-share is not easy. As well as that, transforming a fractional allocation into a randomized one is not straightforward.
Indeed, as we show in Lemma \ref{lem:1}, neither fractional allocations nor randomized allocations can guarantee $\MMS$. We prove an upper bound on the best approximation guarantee of each one of these allocation types. 

\begin{lemma}\label{lem:1}
	For $\XOS$ valuations, the best $\MMS$ guarantee for fractional allocations and the best ex-ante $\MMS$ guarantee for randomized allocation is upper bounded by $3/4$.
\end{lemma}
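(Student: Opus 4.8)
The goal is to establish an upper bound of $3/4$ on both the best $\MMS$ guarantee achievable by fractional allocations and the best ex-ante $\MMS$ guarantee achievable by randomized allocations, for $\XOS$ valuations. Since a randomized allocation's ex-ante utility equals the utility of an associated fractional allocation (and vice versa, complete fractional allocations give rise to randomized ones with at-least-as-good utilities via Lemma~\ref{randomized versus fractional}), it suffices to exhibit a single instance where no fractional allocation can guarantee every agent more than $3/4$ of her $\MMS$ value.

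\medskip
\noindent\textbf{Approach.} The plan is to construct a small, symmetric instance with $\XOS$ valuations in which the $\MMS$ value of each agent is $1$, but the structure of the additive components forces a fundamental tension: allocating fractions of items to satisfy one agent's ``good'' additive function necessarily starves the agents competing for the same items. The natural candidate is an instance with $n=2$ agents (or a small constant number), identical valuations, and a handful of items, where each agent's $\XOS$ function is the max of two additive functions that ``disagree'' on which items are valuable. First I would fix the items and the additive components so that the $\MMS$ partition gives each agent value exactly $1$, pinning down the normalization. Then I would set up the fractional-allocation optimization: because the items are shared fractionally and each agent evaluates her bundle as $\max_k \sum_j u_{i,k}(\ite_j) f_{i,j}$, the best any fractional split can do for the worse-off agent is constrained by a packing/covering argument on the total ``mass'' of value available.

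\medskip
\noindent\textbf{Key steps in order.} First, define the instance explicitly and verify $\MMS_i = 1$ for every agent by exhibiting the optimal partition. Second, write down what value a fractional allocation $\falloc$ delivers to each agent, namely $\valu_i(\falloc) = \max_k \sum_j u_{i,k}(\ite_j) f_{i,j}$. Third, argue that summing the delivered values (or applying a counting/averaging argument over the additive components and the constraint $\sum_i f_{i,j}=1$) bounds the minimum agent's value: the total value that can be simultaneously extracted is capped, so by symmetry the least-satisfied agent receives at most $3/4$. Fourth, conclude by taking the minimum over agents and noting this caps the ex-ante guarantee. The translation to randomized allocations follows immediately since the ex-ante utilities of a randomized allocation coincide with those of the fractional allocation it induces (Lemma~\ref{randomized versus fractional} shows randomization only helps, but for the upper-bound direction we use that any randomized allocation yields a fractional allocation with the same or smaller per-agent ex-ante utility, so the fractional bound transfers).

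\medskip
\noindent\textbf{Main obstacle.} The hard part will be designing the additive components so that the tension is tight at exactly $3/4$ rather than some other constant, and then proving the matching upper bound on the fractional optimum cleanly. The subtlety is that fractional allocations are quite powerful for $\XOS$ valuations---an agent can ``hedge'' across her additive functions---so I must ensure the instance is structured so that no clever fractional hedging beats $3/4$. I expect this requires choosing the additive functions so that each item is valuable under exactly one component per agent, with the components arranged so that any feasible fractional split faces a strict trade-off; then an explicit LP-duality or direct averaging argument over the items and components yields the $3/4$ ceiling. Verifying that the constructed bound is both achieved and not exceeded---i.e., that $3/4$ is the exact fractional optimum for the worst-off agent---will be the crux of the calculation.
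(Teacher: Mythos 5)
Your plan for the fractional half is sound and matches the paper's in spirit: the paper uses exactly the kind of small symmetric instance you describe (two agents, four items, each valuation the max of two additive functions, e.g.\ agent~1 has $u_{1,1}$ valuing $b_1,b_2$ at $1$ each and $u_{1,2}$ valuing $b_3,b_4$ at $1$ each, agent~2 analogously on the pairs $\{b_1,b_4\}$ and $\{b_2,b_3\}$), so that $\MMS_1=\MMS_2=2$ while the sum of values that can be extracted simultaneously is capped at $3$, forcing the worse-off agent below $1.5 = \tfrac{3}{4}\cdot\MMS$. Your averaging argument over the constraint $\sum_i f_{i,j}=1$ is the right engine for this, and the bound is tight (giving $b_1$ to agent~1, $b_3$ to agent~2, and splitting the rest evenly achieves $1.5$ for both).

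However, your reduction from randomized to fractional allocations contains a genuine error in direction. You assert that a randomized allocation's ex-ante utility equals (or is at most) that of the fractional allocation it induces via marginals, so the fractional upper bound transfers. For $\XOS$ valuations this is false: with $f_{i,j}=\Pr[\ite_j \in \allocation_i]$ one has $\valu_i(\ralloc) = \E\bigl[\max_k u_{i,k}(\allocation_i)\bigr] \ge \max_k \E\bigl[u_{i,k}(\allocation_i)\bigr] = \valu_i(\falloc)$, i.e.\ randomization can only \emph{gain} over the induced fractional allocation, and the gap can be enormous --- the paper's Example~\ref{ex:1} exhibits two randomized allocations with identical marginals $1/n$ whose ex-ante values are $1$ versus $O(\log n / n)$. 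Consequently an upper bound on fractional allocations does not cap randomized ones, and your proof of the ex-ante half collapses. The paper closes this by arguing about the randomized allocation directly on its integral support: for the instance above, $\valu_1(S) + \valu_2(\items \setminus S) \le 3$ for \emph{every} set $S$, so summing over the distribution gives $\valu_1(\ralloc)+\valu_2(\ralloc) \le 3$ and hence $\min_i \valu_i(\ralloc) \le 3/2$. The repair is easy --- apply your own averaging idea pointwise to the support of $\ralloc$ rather than routing through the fractional relaxation --- but as written the second half of your argument does not go through.
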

\begin{proof}
	Consider the following instance: there are two agents and four items. The fractionally subadditive valuation of each agent consists of 2 additive functions. The valuations are as follows:
	\begin{align*}
		u_{1,1}(\{b_1\}) = u_{1,1}(\{b_2\}) = 1,u_{1,1}(\{b_3\}) = u_{1,1}(\{b_4\}) = 0,\\
		u_{1,2}(\{b_1\}) = u_{1,2}(\{b_2\}) = 0,u_{1,2}(\{b_3\}) = u_{1,2}(\{b_4\}) = 1,\\
		u_{2,1}(\{b_1\}) = u_{2,1}(\{b_4\}) = 1,u_{2,1}(\{b_2\}) = u_{2,1}(\{b_3\}) = 0,\\
		u_{2,2}(\{b_1\}) = u_{2,2}(\{b_4\}) =0 ,u_{2,2}(\{b_2\}) = u_{2,2}(\{b_3\}) = 1.
	\end{align*}
	It is easy to check that for the above instance, the maximin-share of each agent is equal to $2$, and no fractional allocation can guarantee more than $1.5$ to both agents. Also, we can guarantee a value of $1.5$ to both agents by giving the first and the third items receptively to agents 1 and 2, and giving half of the remaining items to each agent. 
	
	Now, we show that the same upper bound also holds for the ex-ante guarantee of randomized allocations.  Assume that $\ralloc$ is the randomized allocation that maximizes the maximin-share guarantee for this instance. Since there are two agents, we know that $\ralloc$ maximizes the following objective:
	$$
	\alpha = \min \left(\sum_{S \subseteq \items} \mathbb{P}(S) \valu_1(S),\sum_{S \subseteq \items} \mathbb{P}(S) \valu_2(\items \setminus S)\right),
	$$
	where $\mathbb{P}(S)$ is the probability that set $S$ is allocated to agent $1$. Furthermore, since for every integers $y,z$ we have $\min(y,z) \leq (y+z)/2$, and
	\begin{align*}
		\alpha &\leq \left(\sum_{S \subseteq \items} \mathbb{P}(S) \valu_1(S)+\sum_{S \subseteq \items} \mathbb{P}(S) \valu_2(\items \setminus S)\right)/{2}\\
		&=\sum_{S \subseteq \items} \mathbb{P}(S) \left(\valu_1(S)+\valu_2(\items \setminus S)\right) /2.
	\end{align*}
	One can easily check that for every set $S$, the value of $\valu_1(S)+\valu_2(\items \setminus S)$ is upper bounded by $3$. Therefore, we have 
	\begin{align*}
		\alpha &\leq \sum_{S \subseteq \items} (3/2) \mathbb{P}(S)\\
		&\leq 3/2. 
	\end{align*}
	Hence, the best possible approximation guarantee for $\MMS$ in this instance is at most $1.5/2 = 3/4$. Note that one can guarantee a value of $1.5$ ex-ante to both the agents by giving the first and the third items receptively to agents 1 and 2, and giving the rest of the items with a probability of $1/2$ to one of the agents.
\end{proof}

Before we prove our lower bound on the maximin-share guarantee for randomized allocations, we note that another challenge about $\XOS$ valuations is that in sharp contrast to additive valuations, transforming a fractional allocation to a randomized one is not easy. Indeed, we can show that for a fractional allocation $\falloc$ there might be randomized allocations $\ralloc$ and $\ralloc'$ with different utility guarantees for the agents, such that in both $\ralloc$ and $\ralloc'$ the probability that each item $\ite_{j}$  is allocated to agent $i$ is equal to $f_{i,j}$. Example \ref{ex:1} gives more insight into this challenge. 

\begin{example} \label{ex:1}
	Consider the instance described in the proof of Observation \ref{obs:2} and define allocations $\ralloc$ and $\ralloc'$ as follows:
	\begin{itemize}
		\item Allocation $\ralloc$ allocates each item to each agent with probability $1/n$. 
		\item Allocation $\ralloc'$ considers a random permutation of the bundles in the optimal $\MMS$-partitioning of   the agents and allocates the $i^{\textrm{th}}$ bundle in the permutation to agent $i$.\footnote{Note that in the instance described in Observation \ref{obs:2} the optimal $\MMS$-partitioning of all the agents are the same.} 
	\end{itemize}
	It is easy to check that in both of these allocations, each item is allocated to each agent with probability $1/n$. However, the maximin-share guarantee of $\ralloc$ is $O(\log n /n)$. To show this, one can argue that using Chernoff bound the probability that more than $3\log n$ items from the same bundle in the optimal partition are allocated to agent $i$ is $O(1/n^2)$. Hence, the expected value of agent $i$ for her share is at most
	$$
	1 \cdot \frac{1}{n^2}+ \frac{3\log n}{n} \cdot \frac{n^2-1}{n^2} \leq \frac{4 \log n}{n}.
	$$  
	
	On the other hand, allocation $\ralloc'$ guarantees value $1$ to all the agents. 
\end{example}

Despite these hurdles, in Theorem \ref{thm:1} we show that there exists a randomized allocation that guarantees $1/4$-$\MMS$ to all the agents ex-ante. To prove Theorem \ref{thm:1}, we first show that a fractional allocation exists that is $1/4$-$\MMS$. Next, we convert it to a randomized allocation. Theorem \ref{thm:1} along with Lemma \ref{lem:1} leave a gap of $[1/4,3/4)$ between the best upper bound and the best lower-bound for the maximin-share guarantee of fractional allocations in the $\XOS$ setting.

\exante*
    {\color{black} \textbf{Proof Idea:} Let $\falloc$ be the a (fractional) allocation maximizing social welfare and assume that there is an agent $i$ whose bundle has value less than $1/4$ to her. Now consider the $\MMS$-partition of agent $i$. Each bundle in this partition has value at least 1 to $i$. We can split each bundle into two so that each subbundle has value at least $1/2$ to $i$. Let $B$ be one of these $2n$ subbundles. Imagine that we reassign the items in $B$. We take away the items in $B$ from their current owners and give them to $i$. Then $i$ would gain more than $1/4$, but the other agents would lose. The loss is bounded by $C_{\falloc}(B)$. Why should this quantity be less than $1/4$ for one of the $2n$ subbundles?

Lemma~\ref{lem:supper-additive} comes to the rescue. We have 
\[ \sum_{1 \leq j \leq 2n} C_\mathcal{V}^{\falloc}(B_j) \leq \sum_{1 \leq i \leq n} \valu_i(\falloc),\]
where $B_1$ to $B_{2n}$ are the subbundles. If the right hand side is strictly less than $n/2$, the desired subbundle exists. We can achieve this by replacing our valuations $\valu_i$ by valuations $\bar{\valu}_i$ that assign no set a value more than $1/2$.}

\begin{proof} For a fractional allocation, we define the truncated value of agent $i$, denoted by  $\overline{\valu}_i$, of a fractional set $S$ as follows:
	\begin{equation}\label{vbar}
		\overline{\valu}_i(S) = \min\bigg(\frac{1}{2}, \max_k \sum_{1 \leq j \leq m} u_{i,k}(\ite_j)s_{j} \bigg).
	\end{equation}
	where $u_{i,k}$ is the $k^{\textrm{th}}$ additive function of $\valu_i$ and $s_j$ is the fraction of item $\ite_j$ that belongs to set $S$. If the valuation $\valu_{i}(\cdot)$ is $\XOS$, then $\bar{\valu}_i(\cdot)$ is also $\XOS$~\cite{ghodsi2018fair}. Let $\bar{v}=(\bar{v}_1, \ldots, \bar{v}_n)$.
	
	Now let $\falloc$ be the complete fractional allocation that maximizes 
	\begin{equation}\label{truncsocial}
		Z=\sum_{1 \leq i \leq n} \bar{\valu}_i(\falloc).
              \end{equation}
        
	We know $Z \leq n/2$ since for any fractional bundle $S$ and every agent $i$, we have $ \overline{\valu}_i(S) \leq 1/2$. We claim that $\falloc$ allocates each agent a bundle with a value of at least $1/4$. For the sake of a contradiction, assume that this is not true and let agent $a$ be an agent whose share is worth less than $1/4$ to her. Then $Z <n/2-1/4$.
	
	Given that the maximin-share of agent $i$ is equal to $1$, she can divide the items in $\items$ into $2n$ fractional bundles, each of which has a value of at least $1/2$ to her.
	
	 \begin{claim}\label{clm:divide}
	 	Since the maximin-share of agent $i$ is at least $1$, she can divide the item in $\items$ into $2n$ fractional bundles, each with value at least $1/2$ to her.
	 \end{claim}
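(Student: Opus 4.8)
The plan is to start from an $\MMS$ partition of agent $i$ and refine it by cutting every item in half. Since we have normalized $\MMS_i = 1$, the definition of maximin-share guarantees a partition $\langle \Pi_1, \ldots, \Pi_n \rangle$ of $\items$ into $n$ integral bundles with $\valu_i(\Pi_j) \geq 1$ for every $j$. I want to turn these $n$ bundles of value at least $1$ into $2n$ fractional bundles of value at least $1/2$, so the natural idea is to split each $\Pi_j$ into two identical fractional halves, each receiving exactly the fraction $1/2$ of every item of $\Pi_j$.

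Concretely, for each bundle $\Pi_j$ let $k^*(j)$ be an index of the additive function witnessing its value, i.e. $\valu_i(\Pi_j) = u_{i, k^*(j)}(\Pi_j) = \sum_{\ite \in \Pi_j} u_{i,k^*(j)}(\ite) \geq 1$. Define two fractional bundles $B_j$ and $B_j'$, each containing exactly the fraction $1/2$ of every item in $\Pi_j$ and nothing else. The key step is to lower-bound the fractional value of each half. Because the fractional value is a maximum over the additive functions, I can evaluate it at the single function $k^*(j)$:
\[
\valu_i(B_j) = \max_k \sum_{\ite \in \Pi_j} u_{i,k}(\ite)\cdot \tfrac12 \;\geq\; \tfrac12 \sum_{\ite \in \Pi_j} u_{i,k^*(j)}(\ite) = \tfrac12\, \valu_i(\Pi_j) \geq \tfrac12,
\]
and symmetrically $\valu_i(B_j') \geq 1/2$. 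Collecting the $2n$ bundles $\{B_j, B_j'\}_{1 \le j \le n}$ gives the desired family: each item $\ite \in \Pi_j$ contributes fraction $1/2$ to $B_j$ and $1/2$ to $B_j'$, so every item is fully and disjointly distributed, i.e. the $2n$ bundles form a partition of $\items$ into fractional sets each of value at least $1/2$ to $i$.

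The only point that genuinely uses the $\XOS$ assumption — and the step I would flag as the conceptual crux — is that the \emph{same} additive function $k^*(j)$ certifies the value of both halves simultaneously. This is exactly what lets the factor $1/2$ pass outside the sum while keeping each half's value above $1/2$; a merely subadditive valuation need not admit such a single certifying linear functional, so the uniform halving trick is relying on the $\XOS$ structure. Everything else is a direct computation, and no optimization over how to split each bundle is needed, since the symmetric half-split already suffices.
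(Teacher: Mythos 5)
Your proof is correct, and it is essentially the paper's argument with a slightly different (in fact cleaner) split. The paper also fixes an $\MMS$ partition and splits each bundle $\Pi_j$ into two fractional sub-bundles certified by the witnessing additive function, but it does so by carving off a fractional sub-bundle of value \emph{exactly} $1/2$ and noting the remainder then has value at least $1/2$; your symmetric per-item half-split skips even that carving step, since $\valu_i(B_j) \geq \tfrac12\, u_{i,k^*(j)}(\Pi_j) \geq \tfrac12$ holds immediately and identically for both halves. Your remark that a single additive witness certifies both halves simultaneously is precisely the mechanism the paper's proof relies on as well, so the conceptual content is the same. One genuine advantage of your version is worth pointing out: the uniform half-split produces bundles in which every item of $\Pi_j$ appears with fraction exactly $1/2$, and this is exactly the refinement the paper itself switches to later (in the construction preceding Observation~\ref{obs:stupper}) in order to stay within half-integral allocations; the exact-value-$1/2$ cut in the paper's proof of the claim does not have that property. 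So your construction is not only valid here but is the more structured one that the paper ultimately needs.
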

	\begin{claimproof}
	Consider the optimal maximin-share partition of agent $i$, and for each bundle in this partition divide that bundle into two fractional sub-bundles with a value of at least $1/2$. Since the valuation of agent $i$ is $\XOS$, such a division is always possible: just take a fractional sub-bundle with a value of exactly $1/2$ from each bundle. The remaining (fractional) items in that bundle also form a sub-bundle with a value of at least $1/2$. 
	\end{claimproof}
	
	Let ${B}_1,{B}_2,\ldots,{B}_{2n}$ be these $2n$ bundles. By applying Lemma~\ref{lem:supper-additive} we have 
	$$
	\sum_{1 \leq j \leq 2n} C_{\bar{v}}^{\falloc}(B_j) \leq \sum_{1 \leq i \leq n} \bar{\valu}_i(\falloc) = Z.
	$$
	Note that here $C_{\bar{v}}^{\falloc}$ refers to the contribution with respect to $\bar{v}_i$.	Therefore, at least one of the bundles contributes less than $Z/(2n) < 1/4$ to $Z$.
	Let $B_k$ be one such bundle, i.e., $C_{\bar{v}}^{\falloc}(B_k) < 1/4$. Let $b_{kj}$ be the fraction of item $j$ belonging to bundle $B_k$ and let $\falloc'$ be the allocation that defines the contribution of bundle $B_k$ to allocation $\falloc$ (see Program~\ref{eq:contr}).  Then $\sum_i f'_{ij} = \sum_i f_{ij} - b_{kj}$ for all $j$ and
        {\[\sum_{1 \leq i \leq n} \bar{\valu}_i(\falloc) -  \sum_{1 \leq i \leq n} \bar{\valu}_i(\falloc') =  C_{\bar{v}}^\falloc(B_k) <1/4,\]}
        which means
 {$$
	\sum_{1 \leq i \leq n} \bar{\valu}_i(\falloc') > Z - \frac{1}{4}.
	$$}
        We now assign the items in $B_k$ to agent $i$, i.e., we consider the fractional allocation $\falloc''$ equal to $\falloc'$, except that for agent $i$, we have 
	$$
	f''_{i,j} = f'_{i,j} + b_{k,j} \qquad \text{for all $j \in [1\ldots m]$}
	$$
        Since the value of bundle $B_k$ to agent $a$ is at least $1/2$, we have {$\bar{\valu}_{i}(\falloc'') - \bar{\valu}_{i}(\falloc') >1/4$}, and further
      \begin{equation}\label{contradiction}
		\sum_{1 \leq i \leq n} \bar{\valu}_i(\falloc'') >   \sum_{1 \leq i \leq n} \bar{\valu}_i(\falloc') + \frac{1}{4} 
		> (Z - \frac{1}{4}) + \frac{1}{4}
		= Z 
		= \sum_{1 \leq i \leq n} \bar{\valu}_i(\falloc).
	\end{equation}
	However, Inequality \eqref{contradiction} contradicts the fact that allocation $\falloc$ maximizes the social welfare. 
	Hence, $\falloc$ guarantees at least $1/4$ to all the agents.
        
        Finally, let $\ralloc$ be the randomized allocation obtained from $\falloc$ through (\ref{frac-to-rand}). Then $v_i(\ralloc) \ge v_i(\falloc) \ge \bar{v}_i(\falloc) \ge 1/4$ by Lemma~\ref{randomized versus fractional}. 
        Thus, $\ralloc$ is $1/4$-$\MMS$ ex-ante. This completes the proof.
\end{proof}

We remark that though we constructed a $1/4$-$\MMS$ allocation, 
we have no guarantee on the ex-post fairness of our allocation. In the next section, our goal is to improve this allocation to also guarantee a fraction of maximin-share ex-post.
\section{Ex-ante and Ex-post Guarantees} 

Unfortunately, the randomized allocation obtained by Theorem \ref{thm:1} has no ex-post fairness guarantee. The issue is that we use  Theorem \ref{thm:fei} to convert the fractional allocation into a randomized one. However, Theorem \ref{thm:fei} only guarantees that the ex-post value of each agent is at least the value of her fractional allocation minus the value of the heaviest item which is partially (and not fully) allocated to her in the fractional allocation. However, currently, we have no upper bound on the value of the allocated items, and therefore, the ex-post value of an agent might be close to $0$. To resolve this, we perform two improvements on our allocation. 

First, we allocate valuable items beforehand to keep the value of the remaining items as small as possible. 
We start by using a simple and very practical fact that is frequently used in previous studies~\cite{ghodsi2018fair,seddighin2022improved,barman2017approximation,amanatidis2019multiple}: allocating one item to one agent and removing them from the instance does not decrease the maximin-share value of the remaining agents for the remaining items. 

\begin{lemma}\label{ass:1}
	Removing one item and one agent from the instance does not decrease the maximin-share value of the remaining agents for the remaining items.
\end{lemma}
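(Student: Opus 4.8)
The plan is to prove the stronger pointwise statement: for each agent $i$ who remains after the removal, her maximin-share computed in the reduced instance is at least her original maximin-share $\MMS_i$. Let $b$ denote the removed item. The subtlety is that deleting an agent lowers the number of parts in the relevant partition from $n$ to $n-1$, while deleting $b$ shrinks the ground set to $\items\setminus\{b\}$; so the goal is to exhibit, for a fixed remaining agent $i$, a partition of $\items\setminus\{b\}$ into $n-1$ bundles each of value at least $\MMS_i$ to $i$.

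First I would take an optimal maximin-share partition $\langle \Pi_1,\Pi_2,\ldots,\Pi_n\rangle \in \Omega$ of agent $i$ in the original instance, so that $\valu_i(\Pi_j)\ge \MMS_i$ for every $j$ (such a partition exists by definition of $\MMS_i$). The removed item $b$ lies in exactly one of these bundles; after relabeling, assume $b\in\Pi_1$. I would then reconstruct a partition of $\items\setminus\{b\}$ into $n-1$ bundles by discarding $b$ from $\Pi_1$ and absorbing what is left into a neighboring bundle: set
\[
\Pi_1' = (\Pi_1\setminus\{b\})\cup\Pi_2, \qquad \Pi_j' = \Pi_{j+1} \quad\text{for } 2\le j\le n-1,
\]
where the second family just re-indexes the untouched bundles $\Pi_3,\ldots,\Pi_n$. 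Since the $\Pi_j$ are disjoint and cover $\items$, the bundles $\Pi_1',\ldots,\Pi_{n-1}'$ are disjoint and cover exactly $\items\setminus\{b\}$.

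The value guarantee then follows solely from monotonicity of $\valu_i$, which we assume throughout. For the merged bundle, $\Pi_2\subseteq \Pi_1'$ gives $\valu_i(\Pi_1')\ge \valu_i(\Pi_2)\ge \MMS_i$; every other new bundle equals one of the original $\Pi_{j+1}$ and hence already has value at least $\MMS_i$. Thus $\langle \Pi_1',\ldots,\Pi_{n-1}'\rangle$ is a partition of the reduced ground set into $n-1$ parts all of value at least $\MMS_i$ to $i$, so the maximin-share of $i$ in the reduced instance is at least $\MMS_i$. As $i$ was an arbitrary remaining agent, this proves the lemma.

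There is no genuine obstacle here; the argument is purely combinatorial. The only points that require care are bookkeeping — checking that the reconstructed family has exactly $n-1$ parts and covers precisely $\items\setminus\{b\}$ — and recognizing that monotonicity is exactly the tool needed to control the bundle that absorbs the leftover of $\Pi_1$. It is also worth noting explicitly that the identity of the removed agent plays no role in the estimate: deleting an agent serves only to reduce the required number of parts from $n$ to $n-1$, and it is precisely this slack that lets us collapse the two bundles $\Pi_1\setminus\{b\}$ and $\Pi_2$ into one without losing value.
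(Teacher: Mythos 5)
Your proof is correct and is exactly the standard folklore argument this lemma rests on: the paper itself states Lemma~\ref{ass:1} without proof, citing prior work where this same construction appears. Taking an optimal $\MMS$ partition of a remaining agent, deleting the removed item from its bundle, merging the leftover into another bundle, and invoking monotonicity is precisely the intended argument, and your bookkeeping (exactly $n-1$ parts covering $\items\setminus\{b\}$) is sound.
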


Given that our goal is to construct a randomized allocation which is $1/4$-$\MMS$ ex-ante, by Lemma \ref{ass:1} we can assume without loss of generality that the value of each item to each agent is less than $1/4$; otherwise, we can reduce the problem using Lemma \ref{ass:1}.
However, a combination of this assumption and Theorem \ref{thm:1} still gives no ex-post guarantee: the ex-ante guarantee obtained by Theorem \ref{thm:1} is $1/4$-$\MMS$ and assuming that the value of each item to each agent is less than $1/4$ implies no lower-bound better than $0$ on the ex-post $\MMS$ guarantee. 
To improve the ex-post guarantee, we revisit the proof of Theorem \ref{thm:fei} and show that for our setting, a stronger guarantee can be achieved using the matching method for converting a fractional allocation into a randomized one. Indeed, we show that we can find a fractional allocation with a special structure that makes the transformation step more efficient. These ideas together help us achieve a randomized allocation with $1/4$-$\MMS$ guarantee ex-ante and $1/8$-$\MMS$ guarantee ex-post.

\ExAnteExPost*
In the rest of this section, we prove Theorem \ref{thm:2}. The algorithm we use for proving Theorem \ref{thm:2} is as follows:
\begin{enumerate}
    \item \label{step-11} While there exists an item $\ite_{j}$ with value at least $1/4$ to an agent $i$, allocate $\ite_j$ to agent $i$ and remove $i$ and $\ite_j$ respectively from $\agents$ and $\items$. 
    \item \label{step-12} Assuming $[n]$ is the set of the remaining agents, let $\falloc$ be an optimal solution of  the following linear program.
      \begin{align}
	\text{maximize }   &\sum_{1 \leq i \leq n}    u_i  \nonumber      \\
	\text{subject to } &  \sum_{1 \leq i \leq n} f_{i,j} = 1& \text{for all $j$}  \nonumber\\
	&  f_{i,j} \in \{0,1/2,1\} & \text{for all $i$ and $j$} \nonumber \\
	&u_i = \min(\frac{1}{2}, \max_k \sum_j u_{i,k}(\ite_j) f_{i,j}).
	& \text{for all $i$} \label{lp}	
\end{align}
    \item \label{step-13} Convert $\falloc$ into a randomized allocation using Lemma \ref{lem:fractorand}.
\end{enumerate}

\begin{algorithm} [tb]
	\caption{$\mathtt{ExPostExAnteMMS}(\agents, \items, \mathcal{V})$ 
        \\ \textbf{Input:} Instance $(\agents, \items, \mathcal{V})$.
        \\ \textbf{Output:} Allocation $\ralloc$. 
    }
    \label{algo:rand}
	\begin{algorithmic}[1]
        \While{there exists $\ite_j \in \items$ and $i \in \agents$ s.t. $v_i(\ite_j) \geq 1/4$} \Comment{Step \ref{step-11}}
            \State $\ralloc_i \leftarrow \{\ite_j\}$
            \State $\items \leftarrow \items \setminus \{\ite_j\}$
            \State $\agents \leftarrow \agents \setminus \{i\}$
        \EndWhile
        \State Let $\bar{\valu}_i(\cdot) = \min(1/2,\valu_{i}(\cdot))$
        \State Let $\Pi$ be the set of all half-integral allocations of $\items$ to $\agents$
        \State Let $\falloc = \arg \max_{F \in \Pi} \sum_{i \in \agents} \bar{v}_i(F_i)$ \Comment{Step \ref{step-12}}
        \State Let $\ralloc$ be the randomized allocation obtained from $\falloc$ by Lemma \ref{lem:fractorand}. \Comment{Step \ref{step-13}}
        \State Return $\ralloc$
	\end{algorithmic}	
      \end{algorithm}
      
      \noindent
      See Algorithm \ref{algo:rand} for the pseudocode. Recall that by Lemma \ref{ass:1}, after Step \ref{step-11}, the maximin-share value of the remaining agents for the remaining items is at least $1$. For simplicity, we scale the valuations after the first step so that the $\MMS$ value of each remaining agent after the first step is exactly equal to $1$. All agents $i$ who are allocated an item $b_j$ in Step \ref{step-11}, are also allocated $b_j$ in the fractional allocation. Thus, $1/4$-$\MMS$ is guaranteed for $i$ in the final randomized allocation both ex-ante and ex-post. Hence, it is without loss of generality to ignore these agents and assume $n$ is the number of remaining agents after Step \ref{step-11}.

An equivalent description for Step \ref{step-12} is the following. For every agent $i$, define $\bar{\valu}_i$ as follows: 
	$$\forall S \subseteq \items \quad \bar{\valu}_i(S) = \min(1/2,\valu_{i}(S)).$$
	Let $\bar{v} = (\bar{v}_1, \ldots, \bar{v}_n)$ and return a half-integral allocation $\falloc$ that maximizes social welfare with respect to $\bar{v}$, i.e., $\falloc = \arg \max_{A \in \Pi} \sum_{i \in \agents'} \bar{v}_i(A_i)$ where $\Pi$ is the set of all half-integral allocations of $\items$ to $\agents$. 
The goal in Step \ref{step-12} is to find a fractional allocation that is $1/4$-$\MMS$. However, we want this allocation to have a special structure that facilitates constructing the randomized allocation.  Therefore, instead of directly choosing the allocation that maximizes social welfare, we consider $\bar{\valu}_i$ as the valuation function of agent $i$ and return a half-integral allocation. First we prove that $\falloc$ is $1/4$-$\MMS$. 
Otherwise, let $i$ be an agent that has a value less than $1/4$ for her share. By Claim \ref{clm:divide}, we know that agent $i$ can distribute all the items (that have remained after Step \ref{step-11}) into $2n$ bundles each with value at least $1/2$ to her. Here, we construct these $2n$ bundles more carefully.

Indeed, for every bundle in the optimal partitioning of agent $i$, we construct two bundles with a value of at least $1/2$ as follows: we divide each item into two half-unit items and put each half-unit into one bundle. That way, for all items $b_j$, there are two bundles each of which contains one half of $b_j$. 

Using the same deduction as we used in the proof of Theorem \ref{thm:2}, we can say that since the number of remaining agents after Step \ref{step-11} is $n$, the value of one agent for her bundle is less than $1/4$, and the value of the rest of the agents for their bundles is at most $1/2$, the social welfare of the allocation is less than $n/2$. Therefore, at least one of these $2n$ bundles, say $B_k$ contributes less than $1/4$ to social welfare. Now we take back these items from other agents and allocate them to agent $i$. The reallocation increases social welfare as shown in the proof of Theorem \ref{thm:1}. Note that also in this new allocation for every agent $i$ and item $\ite_j$, we have $f_{i,j} \in \{0,1/2,1\}$  which is a contradiction with the choice of $\falloc$. 


\begin{observation}\label{obs:stupper}
	Let $\falloc$ be the allocation after Step \ref{step-12}. Then, for every agent $i$ we have $\valu_i(\falloc) \geq 1/4$. Furthermore, for every item $\ite_j$, we have $f_{i,j} \in \{0,1/2,1\}$. 
\end{observation}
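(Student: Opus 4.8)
The statement to prove is Observation~\ref{obs:stupper}, which has two parts: that after Step~\ref{step-12} every agent receives a bundle of value at least $1/4$ under her true valuation $\valu_i$, and that the allocation $\falloc$ is half-integral, i.e.\ $f_{i,j} \in \{0,1/2,1\}$ for all $i,j$.

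\medskip

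\noindent\textbf{Proof plan.} The second part is immediate from construction: $\falloc$ is chosen in Step~\ref{step-12} as an optimal solution of the linear program~\eqref{lp} (equivalently, the maximizer $\arg\max_{F\in\Pi}\sum_i \bar{v}_i(F_i)$ over the set $\Pi$ of half-integral allocations), so by definition every $f_{i,j}$ lies in $\{0,1/2,1\}$. The work is entirely in the first part, and I would prove it by contradiction, following the template of the proof of Theorem~\ref{thm:1}. First I would record that $Z = \sum_{i} \bar{\valu}_i(\falloc) \le n/2$, since each truncated value $\bar{\valu}_i(\cdot)\le 1/2$. Then, supposing for contradiction that some agent $i$ has $\valu_i(\falloc_i) < 1/4$ (hence $\bar{\valu}_i(\falloc)<1/4$ as well), I get the strict bound $Z < n/2 - 1/4$.

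\medskip

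\noindent Next I would invoke Claim~\ref{clm:divide} in the half-integral form described just before the observation: since $\MMS_i \ge 1$ after scaling, agent $i$ can partition all remaining items into $2n$ bundles $B_1,\dots,B_{2n}$, each of value at least $1/2$ to her, where each bundle is obtained by taking the optimal $\MMS$-partition of $i$ and splitting every item $\ite_j$ into two half-units placed in two sibling bundles. The crucial structural point is that each $B_k$ uses only half-units of items, which is what keeps half-integrality intact through the reallocation. Applying Lemma~\ref{lem:supper-additive} with respect to the truncated valuations $\bar{v}$ gives
\[
\sum_{1 \leq k \leq 2n} C_{\bar{v}}^{\falloc}(B_k) \le \sum_{1 \leq i \leq n} \bar{\valu}_i(\falloc) = Z < \frac{n}{2} - \frac14,
\]
so by averaging over the $2n$ bundles some $B_k$ satisfies $C_{\bar{v}}^{\falloc}(B_k) < Z/(2n) < 1/4$.

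\medskip

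\noindent Finally I would run the reallocation argument verbatim from Theorem~\ref{thm:1}: remove $B_k$ from its current fractional owners (losing less than $1/4$ of social welfare, by the contribution bound) and hand all of $B_k$ to agent $i$. Because $\bar{v}_i(B_k)\ge 1/2$ while $i$ previously held value under $1/4$, agent $i$'s truncated value increases by more than $1/4$, so the total truncated social welfare strictly increases beyond $Z$, as in Inequality~\eqref{contradiction}. The key extra check, and the one genuinely new ingredient beyond Theorem~\ref{thm:1}, is that the resulting allocation remains half-integral: since $\falloc$ is half-integral and each item appears in $B_k$ only as a half-unit, taking $B_k$ away from its owners and reassigning it to $i$ keeps every $f_{i,j}$ in $\{0,1/2,1\}$. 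This contradicts the choice of $\falloc$ as the social-welfare maximizer over the half-integral allocations $\Pi$, completing the argument. I expect this half-integrality-preservation step to be the main subtlety, since it is the reason we build the $2n$ bundles from half-units rather than arbitrary fractional pieces; the welfare inequalities themselves are essentially inherited from Theorem~\ref{thm:1}.
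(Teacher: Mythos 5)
Your proposal is correct and essentially reproduces the paper's own proof: half-integrality of $\falloc$ holds by construction, and the value guarantee follows the same route — assume $\valu_i(\falloc)<1/4$ so that $Z<n/2-1/4$, split each bundle of the $\MMS$ partition into two half-unit bundles $B_1,\dots,B_{2n}$ each worth at least $1/2$ to agent $i$, apply Lemma~\ref{lem:supper-additive} with the truncated valuations $\bar{v}$ to find $B_k$ with $C^{\falloc}_{\bar{v}}(B_k)<1/4$, and reallocate $B_k$ to agent $i$ to strictly increase truncated welfare within the half-integral class, contradicting the optimality of $\falloc$. You even share the paper's one slight looseness: the loss bound $C^{\falloc}_{\bar{v}}(B_k)<1/4$ is attained by the minimizer of Program~\ref{eq:contr}, which need not itself be half-integral, while half-integrality is preserved only by the removal that takes whole half-units from their owners (whose loss must then be bounded directly via the additive functions, as in the proof of Lemma~\ref{lem:supper-additive}) — but since the paper's proof elides exactly the same point, your attempt matches it faithfully.
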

\begin{proof}
    Towards a contradiction assume $v_i(\falloc) < 1/4$ for some agent $i$. Let $B_1, B_2, \ldots, B_{2n}$ be the result of halving the bundles in the optimal partitioning of agent $i$, i.e., dividing each item into two half-unit items and putting each half-unit into one bundle.
    Let $\falloc'$ be the allocation (see Program~\ref{eq:contr}) that defines the contribution of bundle $B_k$ to allocation $\falloc$.  Then $\sum_i f'_{ij} = \sum_i f_{ij} - b_{kj}$ for all $j$ and
{$\sum_{1 \leq i \leq n} \bar{\valu}_i(\falloc) -  \sum_{1 \leq i \leq n} \bar{\valu}_i(\falloc') =  C_{\bar{v}}^\falloc(B_k) <1/4$}. We now assign the items in $B_k$ to agent $i$, i.e., we consider the fractional allocation $\falloc''$ equal to $\falloc'$, except that for agent $i$, we have 
	$$
	\forall_{1 \leq j \leq m} \qquad f''_{i,j} = f'_{i,j} + b_{k,j}.
	$$
        Since the value of bundle $B_k$ to agent $a$ is at least $1/2$, we have {$\bar{\valu}_{i}(\falloc'') - \bar{\valu}_{i}(\falloc') >1/4$}, and hence 
$\sum_{1 \leq i \leq n} \bar{\valu}_i(\falloc'') >  \sum_{1 \leq i \leq n} \bar{\valu}_i(\falloc)$. Since for every agent $i'$ and item $\ite_j$, we have $f''_{i',j} \in \{0,1/2,1\}$ and hence a contradiction to the optimality of $\falloc$. 
\end{proof}

Now, we show how to convert $\falloc$ into a randomized allocation. Recall that the result of Theorem \ref{thm:fei} does not provide us with an ex-post guarantee better than $0$.  
Here, we give a more accurate analysis to prove that the outcome of our algorithm is $1/8$-$\MMS$.  Our construction is based on the Birkhoff---von Neumann theorem: Every fractional perfect matching can be written as a linear combination of integral perfect matchings. We adopt the construction to our setting and, in particular, exploit the fact that all $f_{ij}$ are half-integral.

	
	\begin{lemma} \label{lem:fractorand} Assume that the valuations are additive and let $\falloc$ be a complete fractional allocation with $f_{ij} \in \{0, 1/2, 1\}$ for all $i$ and $j$. Then there is a randomized allocation $\ralloc$ with $\domain(\ralloc) = \{\allocation^1,\allocation^2\}$, such that
		\begin{itemize}
			\item  For every agent $i$ we have $\valu_i(\ralloc) = \valu_{i}(\falloc)$. 
			\item For every agent $i$ we have  
			\[ \min\bigg\{\valu_{i}(\allocation^1_i),\valu_{i}(\allocation^2_i)\bigg\}\geq  v_i(\ralloc) - \frac{\max \{ v_i(b_j)\, | \, f_{ij} = 1/2\}}{2}. \]
		\end{itemize}
	\end{lemma}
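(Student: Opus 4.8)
The plan is to decompose the half-integral fractional allocation $\falloc$ into exactly two integral allocations $\allocation^1, \allocation^2$, each chosen with probability $1/2$, using a Birkhoff--von Neumann style argument specialized to the case $f_{ij}\in\{0,1/2,1\}$. First I would handle the items that are fully assigned: whenever $f_{ij}=1$, item $\ite_j$ goes to agent $i$ in both $\allocation^1$ and $\allocation^2$, so these contribute identically to both outcomes. The interesting part is the set of half-integral entries. For each such item $\ite_j$ there are exactly two agents receiving $1/2$ of it (since the column sums to $1$), so I would build a bipartite multigraph $G$ on agents and these ``half items,'' placing an edge between the two agents sharing each half-integral item. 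Each item thus becomes an edge, and I want to orient/$2$-color the edges so that each of $\allocation^1,\allocation^2$ gets a consistent integral assignment.

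The key structural step is that every vertex (agent) in $G$ has even degree: an agent's fractional share from half-integral items is $\sum_j f_{ij}$ over those columns, and because the social-welfare-maximizing structure forces the total half-mass at each agent to be integral, the number of incident half-edges is even. Then I would decompose $G$ into edge-disjoint cycles (an even-degree multigraph is Eulerian on each component and hence decomposes into cycles), and on each cycle alternately assign consecutive items to $\allocation^1$ and $\allocation^2$. Because each cycle has even length, this alternation is consistent, and each agent receives, in each of the two allocations, exactly half of the half-items incident to it (rounded the two opposite ways across the two allocations). Consequently the expected bundle $\frac{1}{2}\valu_i(\allocation^1_i)+\frac{1}{2}\valu_i(\allocation^2_i)$ equals $\sum_j u_{i,i'}(\ite_j) f_{ij} = \valu_i(\falloc)$ by additivity, giving the first bullet.

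For the second bullet I would compare $\valu_i(\allocation^1_i)$ and $\valu_i(\allocation^2_i)$ directly. The full items contribute equally to both, and the half items incident to $i$ split into two groups of equal cardinality (one group landing in $\allocation^1$, the complementary group in $\allocation^2$), so $\valu_i(\ralloc)-\valu_i(\allocation^1_i)$ equals half the \emph{difference} between the two groups' values. Since each group's value is at least $0$ and the two groups partition the incident half-items, the shortfall of either allocation below the mean $\valu_i(\ralloc)$ is at most half the value of the single most valuable half-item incident to $i$; more carefully, one matches up the items along each cycle so that swapping one item changes the bundle by at most one item's value, and the worst case shortfall is bounded by $\tfrac12\max\{v_i(\ite_j):f_{ij}=1/2\}$. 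This is exactly the stated bound.

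The main obstacle I anticipate is the even-degree claim and making the cycle decomposition genuinely yield the sharp $\tfrac12$-of-a-single-item loss rather than a weaker bound summing over items. Naively, alternating around a cycle could make one allocation systematically worse by an accumulated amount, so the delicate point is to argue that the deviation telescopes: because the assignment is \emph{alternating}, each agent gains one endpoint and loses the adjacent one on every cycle passing through it, and these gains and losses nearly cancel, leaving a residual controlled by at most one incident half-item. I would make this precise by pairing consecutive half-items at each agent along the cycles so that the net value change per agent between $\allocation^1$ and $\allocation^2$ telescopes to at most the value of one unpaired half-item, which is where the half-integrality is essential and where the improvement over Theorem~\ref{thm:fei} comes from.
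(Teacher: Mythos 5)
Your high-level plan (decompose the half-integral allocation into two equiprobable integral allocations via a cycle structure) matches the paper's, but two steps fail as stated. First, the even-degree claim: you justify it by appealing to ``the social-welfare-maximizing structure,'' but the lemma hypothesizes only an arbitrary complete half-integral allocation $\falloc$, and an agent can perfectly well hold an odd number of half-items (e.g., a single item split between two agents gives each of them degree one). The paper handles this by observing that the agents with half-integral $f_i = \sum_j f_{ij}$ are even in number, pairing them up, and adding a zero-value dummy item split between each pair; your proof needs such a step and your stated justification is unavailable. Relatedly, your graph construction is internally inconsistent: if each half-item becomes an edge between the two agents sharing it, the graph lives on agent vertices only and is \emph{not} bipartite, so cycles can be odd (three agents, three items: $a$ between agents $1,2$, $b$ between $2,3$, $c$ between $3,1$ form a triangle), and the alternating assignment you invoke does not exist on an odd cycle.

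The more serious gap is in the second bullet. Alternating along an Euler circuit balances only the \emph{cardinality} of each agent's split, and it pairs items by adjacency in the circuit rather than by value, which does not yield the $\tfrac12\max$ bound. Concretely: let agent $i$ half-own items $a,b$ of value $10$ and $c,d$ of value $1$, with the other halves held by agent $p$ (items $a,c$) and agent $q$ (items $b,d$). All degrees are even, and the circuit $i \to a \to p \to c \to i \to b \to q \to d \to i$ pairs $(a,c)$ and $(b,d)$ at agent $i$; the induced split can give $i$ the bundle $\{c,d\}$ (value $2$) in one allocation and $\{a,b\}$ (value $20$) in the other, a shortfall of $9$ against your claimed bound of $5$. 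The paper's proof avoids precisely this by sorting each agent's half-items in \emph{decreasing value} and splitting agent $i$ into degree-two gadget vertices $x_i^\ell$, each joined to the items of ranks $2\ell-1$ and $2\ell$; the resulting bipartite degree-two graph decomposes into even cycles and hence into two perfect matchings $M^1, M^2$, which forces every value-consecutive pair $(b_{2\ell-1}, b_{2\ell})$ to be split across $\allocation^1$ and $\allocation^2$, after which the telescoping estimate $v_i(\falloc) - v_i(\allocation^r_i) \le \tfrac12 v_i(b_1)$ follows from monotonicity of the sorted values. Your closing remark about ``pairing consecutive half-items at each agent along the cycles'' gestures at this, but consecutiveness in the cycle is the wrong notion: consecutiveness in the value order is what the bound requires, and engineering the decomposition so that the matching structure respects that order is exactly the missing content of the proof.
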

	\begin{proof}
		For each agent $i$, let $f_i = \sum_j f_{ij}$.  Since $\sum_i f_i= m$,  the number of agents with non-integral $f_i$ is even. We pair the agents with non-integral $f_i$ arbitrarily. For each pair, we create a new dummy item with a value of zero for all the agents and assign one-half of the dummy item to each agent in the pair. In this way, for every agent $i,$  $f_i$ becomes an integer. Therefore, for the rest of the proof we assume that for every agent $i$, $f_i$ is an integer.
		
		We now construct allocations $\allocation^1$ and $\allocation^2$ such that $f_{ij}$ is equal to the fraction of allocations in $\ralloc$ that allocate item $\ite_j$ to $i$, i.e., if $f_{ij} = 1$ we allocate $\ite_j$ to $i$ in both allocations, if $f_{ij} = 0$, we allocate $\ite_j$ to $i$ in neither allocations, and if $f_{ij} = 1/2$ we allocate $\ite_j$ to $i$ in exactly one of the two allocations. For brevity, we define $\items_{1/2}$ and $\agents_{1/2}$ as follows:
		\begin{align*}
			\items_{1/2} &= \{\ite_j| \exists i: f_{i,j} = 1/2\}\\
			\agents_{1/2} &= \{i| \exists \ite_j: f_{i,j} = 1/2 \}
		\end{align*}
		
		Consider a bipartite graph $G(X,Y)$ with parts $X$ and $Y$ as follows: For every item $\ite_j \in \items_{1/2}$  there is a vertex $y_j$ in $Y$ corresponding to item $\ite_j$. For every agent $i \in \agents_{1/2}$, we have vertices $x^1_i,x^2_i,\ldots,x_i^{t_i}$ in $X$, where $t_i$ is half the number of items $\ite_j$ such that $f_{ij} = 1/2$, that is
		$$
		t_i = \frac{\big|\{\ite_j| f_{i,j} = 1/2\}\big|}{2}.
		$$
		Also, we add the following edges to $G$. For every $i \in \agents_{1/2}$, order the items $\ite_j$ with $f_{ij} = 1/2$ in decreasing order of their value to agent $i$. Then we connect $x_i^1$ to the first two items, $x^2_i$ to the items with ranks three and four, and so on. In this way, all vertices in $X$ and $Y$ have degree two. 
		Hence, $G$ decomposes into vertex disjoint cycles. Each cycle decomposes into two matchings (note that since the graph is bipartite, all the cycles have even length), and thus $G$ decomposes into two perfect matchings, say $M^1$ and $M^2$. We define allocations $\allocation^1$ and $\allocation^2$ as follows: for every item $\ite_j$, agent $i$, and $r \in \{1,2\}$, we allocate item $\ite_j$ to agent $i$ in $\allocation^r$, if and only if either $f_{i,j} = 1$, or $M^r(y_j) = x_i^k$ for some $1 \leq k \leq t_i$, where $M^r(y_j)$ refers to the vertex matched with $y_j$ in $M^r$. Define $\ralloc$ as a randomized allocation that selects $\allocation^1$ or $\allocation^2$, each with probability $1/2$. 
		
		It is easy to check that for every agent $i$, $\valu_{i}(\ralloc) = \valu_{i}(\falloc)$. Here, we focus on the ex-post guarantee of $\ralloc$. Fix an agent $i$ and Let $b_1$, $b_2$, \ldots, $b_{2t_i}$ be the items half-owned by $i$ in order of decreasing value for agent $i$. Then, by the way we construct $M^1$ and $M^2 $, for all $1 \le \ell \le t_i$, $b_{2\ell - 1}$ and $b_{2\ell}$ are allocated to $i$ in different allocations. Hence, the value of the $i^{\textrm{th}}$ bundle in either allocation $\allocation^r$ satisfies
		\[    v_i(\allocation_i^r) \ge v_i(b_2) + v_i(b_4) + \ldots + v_i(b_{2t_i}) + \sum_{j: f_{i,j}=1} \valu_i(\ite_j) .\]
		We can now bound $v_i(\falloc) - v_i(\allocation_i^r)$ from above. 
		\begin{align*}
			v_i(\falloc) - v_i(\allocation_i^r) &\le \frac{1}{2} \left(\sum_{1 \le \ell \le 2t_i} v_i(b_\ell) \right)  -  \sum_{1 \le \ell \le t_i} v_i(b_{2\ell}) \\
			&= \frac{v_i(b_1)}{2} + \sum_{1 \le \ell < t_i} \left(\frac{v_i(b_{2\ell}) + v_i(b_{2\ell+1})}{2} - v_i(b_{2\ell})\right) + \frac{v_i(b_{2t_i})}{2} - v_i(b_{2t_i})\\
			&\le  \frac{v_i(b_1)}{2}.
		\end{align*}
	\end{proof}
 
	Now we are ready to prove Theorem \ref{thm:2}.
    \ExAnteExPost*
    \begin{proof}
        The ex-ante guarantee follows from Observation \ref{obs:stupper} and Lemma \ref{frac-to-rand}.
        
        Let $\allocation^1$ and $\allocation^2$ be the integral allocations obtained by Lemma \ref{lem:fractorand}. Consider any agent $i$, and let $u_{i,i'}$ be such that $v_i(\ralloc) = \sum_j f_{ij} u_{i,i'}(b_j)$. Then, by Lemma \ref{lem:fractorand} for $r \in \{1,2\}$ we have
	\[  u_{i,i'}(\allocation_i^r) \ge u_{i,i'}(\ralloc_i^\ell) - \frac{\max \{u_{i,i'}(b_j) \, \vert\, f_{i,j} = 1/2\}}{2}\]
	and since by Lemma \ref{ass:1} we know the value of each item for each agent is less than $1/4$, we have 
	\[ v_i(\allocation_i^\ell) \ge v_i(\ralloc_i) - \frac{\max_j v_i(b_j)}{2} > \frac{1}{4} - \frac{1}{8} = \frac{1}{8}.\]
	Hence, the ex-post guarantee holds as well.    
    \end{proof}

\section{\boldmath $3/13$-$\MMS$ Allocation}\label{sec:expost}
In this section, we improve the best approximation guarantee of $\MMS$ for deterministic allocations in the fractionally subadditive setting. We show that a factor $3/13 \approx 0.230769$ of the maximin-share of every agent is possible. Before this work, the best approximation guarantee for maximin-share in the $\XOS$ setting was $0.219225$-$\MMS$~\cite{seddighin2022improved}.

Our algorithm for improving the ex-post guarantee is based on our previous algorithms plus two additional steps and a more  in-depth analysis. In this algorithm, before finding the allocation that maximizes social welfare, we strengthen our upper bound on the value of items. For this, we add two more steps to our algorithm in which we satisfy some of the agents with two items and three items. In contrast to the first step (i.e., allocating single items to agents), these steps might decrease the maximin-share value of the remaining agents for the remaining items. 
Let $t =6/13$. The goal is to find a $t/2$-$\MMS$ allocation. Our allocation algorithm is as follows:
\begin{enumerate}
	\item \label{step-1} While there exists an item $\ite_{j}$ with value at least $t/2$ to an agent $i$, allocate $\ite_j$ to agent $i$ and remove $i$ and $\ite_j$ respectively from $\agents$ and $\items$. 
	\item \label{step-2} While there exists a pair of items $\ite_{j},\ite_k$ with total value of at least $t/2$ to some agent $i$, allocate $\{\ite_{j},\ite_k\}$ to agent $i$, remove both goods from $\items$, and remove agent $i$ from $\agents$. 
    \item \label{step-3} While there exists a triple of items $\ite_j,\ite_k, \ite_s$ with total value of at least $t/2$ to some agent $i$, allocate $\{\ite_{j},\ite_k, \ite_s\}$ to agent $i$, remove all three goods from $\items$, and remove agent $i$ from $\agents$. 
    \item \label{step-4} For the remaining agents $\agents'$ and items $\items'$, proceed as follows: for every agent $i$, define $\bar{\valu}_i$ as follows: 
	$$\forall S \subseteq \items \quad \bar{\valu}_i(S) = \min(t,\valu_{i}(S)).$$
	Let $\bar{v} = (\bar{v}_1, \ldots, \bar{v}_n)$ and return an allocation $\allocation$ that maximizes social welfare with respect to $\bar{v}$, i.e., $\allocation = \arg \max_{A \in \Pi} \sum_{i \in \agents'} \bar{v}_i(A_i)$ where $\Pi$ is the set of all allocations of $\items'$ to $\agents'$. 
\end{enumerate}
\begin{algorithm} [tb]
	\caption{$\mathtt{approxMMS}(\agents, \items, \mathcal{V})$ 
        \\ \textbf{Input:} Instance $(\agents, \items, \mathcal{V})$.
        \\ \textbf{Output:} Allocation $\allocation$. 
    }
    \label{algo:det}
	\begin{algorithmic}[1]
        \State Let $t = 6/13$
        \While{there exists $\ite_j \in \items$ and $i \in \agents$ s.t. $v_i(\ite_j) \geq t/2$} \Comment{Step 1}
            \State $\allocation_i \leftarrow \{\ite_j\}$
            \State $\items \leftarrow \items \setminus \{\ite_j\}$
            \State $\agents \leftarrow \agents \setminus \{i\}$
        \EndWhile

        \While{there exists $\ite_j, \ite_k \in \items$ and $i \in \agents$ s.t. $v_i(\{\ite_j,\ite_k\}) \geq t/2$} \Comment{Step 2}
            \State $\allocation_i \leftarrow \{\ite_j, \ite_k\}$
            \State $\items \leftarrow \items \setminus \{\ite_j, \ite_k\}$
            \State $\agents \leftarrow \agents \setminus \{i\}$
        \EndWhile

        \While{there exists $\ite_j, \ite_k, \ite_s \in \items$ and $i \in \agents$ s.t. $v_i(\{\ite_j, \ite_k, \ite_s\}) \geq t/2$} \Comment{Step 3}
            \State $\allocation_i \leftarrow \{\ite_j, \ite_k, \ite_s\}$
            \State $\items \leftarrow \items \setminus \{\ite_j, \ite_k, \ite_s\}$
            \State $\agents \leftarrow \agents \setminus \{i\}$
        \EndWhile
        \State Let $\agents' = \agents$, $\items' = \items$ and $\bar{\valu}_i(\cdot) = \min(t,\valu_{i}(\cdot))$
        \State Let $\Pi$ be the set of all allocations of $\items'$ to $\agents'$
        \State Let $\allocation = \arg \max_{A \in \Pi} \sum_{i \in \agents'} \bar{v}_i(A_i)$ \Comment{Step 4}
        \State Return $\allocation$
	\end{algorithmic}	
\end{algorithm}

In the rest of this section, we analyze the above algorithm. See Algorithm \ref{algo:det} for the pseudocode. By Lemma \ref{ass:1}, after Step \ref{step-1}, the $\MMS$ value of all the agents is at least $1$. Let $n$ be the number of remaining agents after Step \ref{step-1}.
We denote by $n_1$ and $n_2$, the number of agents that are satisfied in Steps \ref{step-2} and \ref{step-3} respectively and let $n' = n-n_1-n_2 = |\agents'|$ be the number of remaining agents after Step \ref{step-3}. In contrast to the first step, Step \ref{step-2} and \ref{step-3} might decrease the maximin-share value of the remaining agents for the remaining items. However, we prove that the remaining items satisfy certain special structural properties. 

\begin{observation}\label{obs:4:1}
	Since no item can satisfy any remaining agent after Step \ref{step-1}, for every agent $i$ and every item $\ite_j$, we have 
	$\valu_{i}(\{\ite_j \})<t/2	$. 
\end{observation}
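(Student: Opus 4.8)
The plan is to read the claim directly off the termination condition of Step~\ref{step-1}. Step~\ref{step-1} is a \texttt{while} loop whose guard is the existence of an item $\ite_j \in \items$ and an agent $i \in \agents$ with $\valu_i(\{\ite_j\}) \geq t/2$. The loop body allocates such an item to such an agent and deletes both from the instance, strictly decreasing $|\items|$ at each iteration; hence the loop terminates after finitely many steps. First I would note that, upon termination, the negation of the guard holds for the surviving sets $\agents$ and $\items$: there is \emph{no} remaining item $\ite_j$ and remaining agent $i$ with $\valu_i(\{\ite_j\}) \geq t/2$.

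Translating this negated guard into the universally quantified statement is the whole content of the observation. For every remaining agent $i$ and every remaining item $\ite_j$ we therefore have $\valu_i(\{\ite_j\}) < t/2$, which is exactly the asserted inequality. The quantifiers ``for every agent $i$ and every item $\ite_j$'' in the statement are understood to range over the agents in $\agents$ and items in $\items$ that survive Step~\ref{step-1}, and it is on these surviving sets that Steps~\ref{step-2}--\ref{step-4} subsequently operate.

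There is essentially no obstacle here: the observation is an immediate structural consequence of how the first loop is written, and requires no appeal to monotonicity, the $\XOS$ structure, or any properties of $\MMS$. The only point deserving a word of care is that $t/2$ is precisely the threshold used in the guard of Step~\ref{step-1} (with $t = 6/13$), so the bound in the observation is tight by construction and matches the loop condition verbatim. I would state the argument in one or two lines and move on, since the value of the observation lies not in its proof but in its use later, as the starting bound on single-item values that Steps~\ref{step-2} and~\ref{step-3} will refine.
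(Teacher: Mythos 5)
Your proof is correct and matches the paper exactly: the paper treats this observation as self-evident, with the justification built into its statement (``since no item can satisfy any remaining agent after Step~1''), which is precisely your reading of the negated \texttt{while}-loop guard of Step~1 applied to the surviving agents and items. Nothing further is needed.
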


Also, by the method that we allocate the items in Step \ref{step-3},
after this step the following observation holds.
\begin{observation}\label{obs:5:1}
	Since after Step \ref{step-3}, no triple of items can satisfy an agent, for every different items $\ite_j,\ite_k,\ite_s$ and every agent $i$ we have $\valu_{i}(\{\ite_j,\ite_k,\ite_s\}) < t/2$.
\end{observation}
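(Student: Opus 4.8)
The plan is to read the claim directly off the termination condition of the \texttt{while} loop in Step~\ref{step-3}, so the argument is short. First I would note that the loop terminates: each iteration removes one agent from $\agents$ (and three items from $\items$), and $\agents$ is finite, so only finitely many iterations can occur. I write $\agents'$ and $\items'$ for the agents and items that survive once the loop halts; these are precisely the sets over which the observation quantifies.

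The key step is to invoke the negation of the loop guard. The loop of Step~\ref{step-3} continues exactly as long as there exist three distinct items $\ite_j, \ite_k, \ite_s \in \items$ and an agent $i \in \agents$ with $\valu_i(\{\ite_j, \ite_k, \ite_s\}) \geq t/2$. Since the loop has halted on the current sets, this existential condition fails for $\agents'$ and $\items'$. Negating it gives exactly the universal statement claimed: for every agent $i \in \agents'$ and every triple of distinct items $\ite_j, \ite_k, \ite_s \in \items'$ we have $\valu_i(\{\ite_j, \ite_k, \ite_s\}) < t/2$. I would point out that the guard uses the non-strict inequality $\geq t/2$, so its negation is the strict inequality $< t/2$, which is precisely what the observation asserts.

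The closest thing to an obstacle is purely bookkeeping, and I would flag two points explicitly. The quantification is over the surviving items $\items'$ and surviving agents $\agents'$ only: items and agents removed during Steps~\ref{step-1}--\ref{step-3} are absent from the instance handed to Step~\ref{step-4}, so they impose no constraint and need not be considered. And if fewer than three items survive, there are no triples at all and the statement holds vacuously. With these understood, the observation follows immediately from the loop's exit condition and requires no further computation.
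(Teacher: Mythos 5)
Your proof is correct and takes essentially the same route as the paper, which states this observation without a separate argument precisely because it is the negation of the Step~\ref{step-3} loop guard upon termination: once the \texttt{while} loop halts, no agent $i$ and distinct items $\ite_j,\ite_k,\ite_s$ among the survivors satisfy $\valu_i(\{\ite_j,\ite_k,\ite_s\}) \geq t/2$, giving the strict bound $< t/2$. Your additional bookkeeping (finite termination, quantification over the surviving $\agents'$ and $\items'$ only, and the vacuous case of fewer than three remaining items) is sound and merely makes explicit what the paper leaves implicit.
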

Note that since the valuations are $\XOS$, Observation \ref{obs:5:1} implies no upper bound better than $t/2$ on the value of a single item to an agent. For example, consider the following extreme scenario: for a small constant $\epsilon>0$, the value of every non-empty subset of items to agent $i$ is equal to $t/2-\epsilon$. It is easy to check that this valuation function is $\XOS$. For this case, the value of every triple of items is also equal to $t/2-\epsilon$, but this implies no upper bound  better than $t/2$ on the value of a single item.

\begin{lemma}\label{cl:1:1}
	Fix a remaining agent $i$ and consider the $n$ bundles with value at least $1$ in an $\MMS$ partition of agent $i$ after Step \ref{step-1}. 
	Put these bundles into  4 different sets $B_0, B_1, B_2, B_{\geq 3}$, where for $0 \leq \ell \leq 2$, set $B_\ell$ contains bundles that lose exactly $\ell$ items in Steps \ref{step-2} and \ref{step-3}, and $B_{\geq 3}$ contains bundles that lose at least three  items in these steps. 
After Step \ref{step-3}, the following inequality holds:
	$$
n' \leq |B_0| + \frac{2}{3}|B_1| + \frac{1}{3}|B_2|.
	$$
\end{lemma}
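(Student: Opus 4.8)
The plan is to prove the inequality by a double-counting argument on the items that the bundles of agent $i$'s $\MMS$ partition lose during Steps~\ref{step-2} and~\ref{step-3}. First I would fix the remaining agent $i \in \agents'$ and recall that, by Lemma~\ref{ass:1}, after Step~\ref{step-1} we have $\MMS_i \geq 1$, so there is a partition of the items surviving Step~\ref{step-1} into $n$ bundles, each worth at least $1$ to $i$; these are exactly the bundles classified into $B_0, B_1, B_2, B_{\geq 3}$. Since they partition all items present after Step~\ref{step-1}, I have $|B_0| + |B_1| + |B_2| + |B_{\geq 3}| = n$, and every item removed in Steps~\ref{step-2} and~\ref{step-3} lies in precisely one of these bundles.

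Next I would count the removed items in two ways. On one hand, Step~\ref{step-2} removes a pair from the instance for each of the $n_1$ satisfied agents and Step~\ref{step-3} removes a triple for each of the $n_2$ satisfied agents, so exactly $2n_1 + 3n_2$ items are removed from the bundles in total. On the other hand, grouping by bundle class, a bundle in $B_\ell$ contributes exactly $\ell$ removed items for $\ell \in \{0,1,2\}$, while a bundle in $B_{\geq 3}$ contributes at least $3$ each. Equating and bounding below yields
$$|B_1| + 2|B_2| + 3|B_{\geq 3}| \leq 2n_1 + 3n_2.$$

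Finally I would combine this with the bookkeeping identity $n' = n - n_1 - n_2 = |B_0| + |B_1| + |B_2| + |B_{\geq 3}| - (n_1 + n_2)$. Dividing the displayed inequality by $3$ gives $\tfrac{1}{3}|B_1| + \tfrac{2}{3}|B_2| + |B_{\geq 3}| \leq \tfrac{2}{3}n_1 + n_2 \leq n_1 + n_2$, and substituting this lower bound on $n_1 + n_2$ into the expression for $n'$ collapses, after cancelling $|B_0|$ and collecting the $|B_1|$, $|B_2|$, $|B_{\geq 3}|$ terms, to the claimed bound $n' \leq |B_0| + \tfrac{2}{3}|B_1| + \tfrac{1}{3}|B_2|$. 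The argument is essentially pure counting; the only point that needs care is the observation that every item removed in Steps~\ref{step-2} and~\ref{step-3} is charged to a bundle of $i$'s partition, so that the two counts of removed items match, together with the fact that the ``loses at least three items'' property of $B_{\geq 3}$ bundles is exactly what makes the division by $3$ and hence the $\tfrac{1}{3}$/$\tfrac{2}{3}$ weighting go through.
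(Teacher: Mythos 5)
Your proof is correct and follows essentially the same route as the paper: the same counting inequality $|B_1| + 2|B_2| + 3|B_{\geq 3}| \leq 2n_1 + 3n_2$, the same division by $3$ to get $\tfrac{1}{3}|B_1| + \tfrac{2}{3}|B_2| + |B_{\geq 3}| \leq n_1 + n_2$, and the same substitution into $n' = n - n_1 - n_2$ using $n = |B_0| + |B_1| + |B_2| + |B_{\geq 3}|$. Your added remark that every removed item is charged to exactly one bundle of the $\MMS$ partition makes explicit a point the paper leaves implicit, but it is the same argument.
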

\begin{proof}
	Since each satisfied agent in step $k$ receives $k$ items, we have:
	$$2n_1 + 3n_2 \geq |B_1| + 2|B_2|+3|B_{\geq 3}|.$$
    Thus, 
    \begin{equation} \label{ineq:1:1}
        n_1 + n_2 \geq \frac{2}{3}n_1 + n_2 
        \geq \frac{1}{3}|B_1| + \frac{2}{3}|B_2| + |B_{\geq 3}|,
      \end{equation}
      and therefore,
    \begin{align*}
        n' &= n-n_1-n_2 \\
        &\leq n - \frac{1}{3}|B_1| - \frac{2}{3}|B_2| - |B_{\geq 3}| \tag{Inequality \ref{ineq:1:1}}\\
        &= |B_0|+\frac{2}{3}|B_1|+\frac{1}{3}|B_2|. \tag{$n=|B_0|+|B_1|+|B_2|+|B_{\geq 3}|$}
    \end{align*}
\end{proof}

Finally, in Step \ref{step-4}, we find the integral allocation $\allocation$ that maximizes social welfare with respect to $\bar{\valu}$ for the remaining agents. Let  
$$
Z = \sum_{i \in \agents'} \bar{\valu}_i(\allocation_i).
$$

Since for each remaining agent $i$, $\bar{v}_i(\allocation_i)$ is upper-bounded by $t$, we have $Z \leq n't$. If for every agent $i$, $\valu_i(\allocation_i)\geq t/2$ holds, then $\allocation$ is $t/2$-$\MMS$, and we are done. Therefore, for the rest of this section, assume that for an agent $i^*$, we have $\valu_{i^*}(\allocation_{i^*})<t/2$.

\begin{lemma}\label{lem:important}
    For all sets $S \subseteq M$, $C^\allocation_{\bar{v}}(S) \geq \bar{v}_{i^*}(S) - \bar{v}_{i^*}(\allocation_{i^*})$.
\end{lemma}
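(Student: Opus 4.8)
The plan is to prove this by a one-shot exchange argument that exploits the optimality of $\allocation$ as a maximizer of social welfare with respect to $\bar{v}$. The guiding idea is to reassign \emph{all} items of $S$ to the under-served agent $i^*$ and observe that, since $\allocation$ already maximizes the truncated social welfare, this reassignment cannot make the social welfare go up; the resulting inequality is exactly what we want once the contribution term is unfolded.

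Concretely, I would fix $S \subseteq \items'$ and define a competing allocation $\allocation'$ by moving every item of $S$ to agent $i^*$: set $\allocation'_{i^*} = \allocation_{i^*} \cup S$ and $\allocation'_i = \allocation_i \setminus S$ for all $i \neq i^*$. This is a legitimate allocation, since each item still goes to exactly one agent, so the optimality of $\allocation$ gives $\sum_i \bar{v}_i(\allocation_i) \geq \sum_i \bar{v}_i(\allocation'_i)$. Splitting off the $i^*$ term and rearranging yields
$$\sum_{i \neq i^*}\Big(\bar{v}_i(\allocation_i) - \bar{v}_i(\allocation_i \setminus S)\Big) \;\geq\; \bar{v}_{i^*}(\allocation_{i^*} \cup S) - \bar{v}_{i^*}(\allocation_{i^*}).$$

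To finish I would use two monotonicity facts. First, $\bar{v}_{i^*}$ is monotone (it is the pointwise minimum of the constant $t$ and the monotone function $v_{i^*}$), so $\bar{v}_{i^*}(\allocation_{i^*} \cup S) \geq \bar{v}_{i^*}(S)$, which lower-bounds the right-hand side by $\bar{v}_{i^*}(S) - \bar{v}_{i^*}(\allocation_{i^*})$. Second, unfolding the definition of contribution for integral allocations from Equation~\eqref{eq:3} gives $C^{\allocation}_{\bar{v}}(S) = \sum_i \big(\bar{v}_i(\allocation_i) - \bar{v}_i(\allocation_i \setminus S)\big)$, and the $i = i^*$ summand $\bar{v}_{i^*}(\allocation_{i^*}) - \bar{v}_{i^*}(\allocation_{i^*} \setminus S)$ is non-negative by monotonicity. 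Hence $C^{\allocation}_{\bar{v}}(S) \geq \sum_{i \neq i^*}(\cdots)$, and chaining the two inequalities delivers $C^{\allocation}_{\bar{v}}(S) \geq \bar{v}_{i^*}(S) - \bar{v}_{i^*}(\allocation_{i^*})$.

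I do not anticipate a serious obstacle: unlike the $2n$-bundle decomposition argument behind Theorem~\ref{thm:1}, this is a single swap. The only points that need care are verifying that $\allocation'$ is indeed a valid allocation (so optimality of $\allocation$ applies), keeping track that the contribution invoked is the integral notion of Equation~\eqref{eq:3} rather than the fractional program~\eqref{eq:contr}, and confirming that the truncation to $\bar{v}$ preserves monotonicity so that both monotonicity steps are justified.
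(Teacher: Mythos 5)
Your proof is correct and is essentially the paper's own argument: the paper builds the same competing allocation in two steps (first removing $S$ from every bundle, then handing $S$ to $i^*$, so its $\allocation''$ coincides with your $\allocation'$) and likewise finishes via optimality of $\allocation$ under $\bar{v}$ together with monotonicity of the truncated valuation. The only cosmetic difference is bookkeeping—you drop the non-negative $i^*$ summand of the contribution sum, while the paper absorbs it through $\bar{v}_{i^*}(\allocation'_{i^*}) \leq \bar{v}_{i^*}(\allocation_{i^*})$—which changes nothing of substance.
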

\begin{proof}
    Let allocation $\allocation'$ be as following. For all agents $i$, $\allocation'_i = \allocation_i \setminus S$. Basically, $\allocation'$ is allocation $\allocation$ after removing all the items in $S$ from the bundles they belong to. We have
    \begin{align}\label{ineq:12}
        \sum_{i \in \agents} \bar{v}_{i} (\allocation'_i) &= \sum_{i \in \agents} \bar{v}_{i} (\allocation_i) - C^\allocation_{\bar{v}}(S).
    \end{align}
    Now let $\allocation''$ be allocation $\allocation'$ after allocating $S$ to agent $i^*$. I.e., for all agents $i \neq i^*$, $\allocation''_i = \allocation'_i$ and $\allocation''_{i^*} = \allocation'_{i^*} \cup S$. We have
    \begin{align*}
        \sum_{i \in \agents} \bar{v}_{i} (\allocation_i) &\geq \sum_{i \in \agents} \bar{v}_{i} (\allocation''_i) \tag{$\allocation = \argmax_{A \in \Pi} \sum_{i \in \agents'} \bar{v}_i(A_i)$}\\
        &= \sum_{i \in \agents \setminus \{i^*\}} \bar{v}_{i} (\allocation'_i) + \bar{v}_{i^*}(\allocation'_{i^*} \cup S) \\
        &= \left(\sum_{i \in \agents} \bar{v}_{i} (\allocation_i) - C^\allocation_{\bar{v}}(S) - \bar{v}_{i^*}(\allocation'_{i^*})\right) + \bar{v}_{i^*}(\allocation'_{i^*} \cup S) \tag{Inequality \eqref{ineq:12}}\\
        &\geq \sum_{i \in \agents} \bar{v}_{i} (\allocation_i) - C^\allocation_{\bar{v}}(S) - \bar{v}_{i^*}(\allocation'_{i^*}) + \bar{v}_{i^*}(S). \tag{$\bar{v}_{i^*}(\allocation'_{i^*} \cup S) \geq \bar{v}_{i^*}(S)$}
    \end{align*}
    Therefore, $C^\allocation_{\bar{v}}(S) \geq \bar{v}_{i^*}(S) - \bar{v}_{i^*}(\allocation_{i^*})$.
\end{proof}

Let $B_0,B_1,$ and $B_2$ be the sets defined for agent $i^*$ in Lemma \ref{cl:1:1}. In Lemmas \ref{B0-bound}, \ref{B1-bound} and \ref{B2-bound}, we give lower bounds on the contribution of the bundles in $B_0$, $B_1$ and $B_2$ to $\allocation$ respectively.
\begin{lemma}\label{B0-bound}
    After Step \ref{step-3}, for all bundles $X \in B_0$, there exists a partition of $X$ into $X^1$ and $X^2$ such that $C^{\allocation}_{\bar{v}}(X^1)+C^{\allocation}_{\bar{v}}(X^2) \geq t$.
\end{lemma}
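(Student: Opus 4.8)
The plan is to convert the combinatorial restriction that Steps \ref{step-2}--\ref{step-3} leave on the surviving items into a balanced two-way split of the bundle $X$, and then feed the two halves into Lemma \ref{lem:important}.

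First I would record what membership in $B_0$ buys us. Since a bundle in $B_0$ loses no item in Steps \ref{step-2}--\ref{step-3}, all of $X$ survives into $\items'$, and as $X$ is a bundle of the $\MMS$ partition of $i^*$ taken after Step \ref{step-1} we have $v_{i^*}(X)\ge 1$. I would fix an additive function $u$ in the $\XOS$ decomposition of $v_{i^*}$ with $u(X)=v_{i^*}(X)\ge 1$; note $u(T)\le v_{i^*}(T)$ for every $T\subseteq X$. The key structural input is Observation \ref{obs:5:1}: any three distinct items of $X\subseteq\items'$ have $v_{i^*}$-value $<t/2$, hence $u$-value $<t/2$.

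From this I would extract two facts: at most two items of $X$ have $u$-value $\ge t/6$ (three such would already sum to $\ge t/2$), and these ``large'' items have total $u$-value $<t/2$ (being at most the $u$-value of some triple); every other item is ``small'', with $u$-value $<t/6$. Now order the items of $X$ by decreasing $u$-value and let $X^1$ be the shortest prefix with $u(X^1)\ge t$ (it exists since $u(X)\ge 1>t$) and $X^2=X\setminus X^1$. Because the large items alone contribute $u$-value $<t/2<t$, the item completing the prefix is small, so $u(X^1)<t+t/6=7t/6$; with $t=6/13$ this equals $7/13$, whence $u(X^2)=u(X)-u(X^1)>1-7/13=6/13=t$. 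Thus $v_{i^*}(X^1)\ge u(X^1)\ge t$ and $v_{i^*}(X^2)\ge u(X^2)>t$, so $\bar{v}_{i^*}(X^1)=\bar{v}_{i^*}(X^2)=t$.

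Finally I would apply Lemma \ref{lem:important} to each part. Since $v_{i^*}(\allocation_{i^*})<t/2$ we have $\bar{v}_{i^*}(\allocation_{i^*})=v_{i^*}(\allocation_{i^*})<t/2$, so for $r\in\{1,2\}$,
\[
C^{\allocation}_{\bar{v}}(X^r)\;\ge\;\bar{v}_{i^*}(X^r)-\bar{v}_{i^*}(\allocation_{i^*})\;=\;t-v_{i^*}(\allocation_{i^*})\;>\;t-\tfrac{t}{2}\;=\;\tfrac{t}{2},
\]
and summing yields $C^{\allocation}_{\bar{v}}(X^1)+C^{\allocation}_{\bar{v}}(X^2)>t$. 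The delicate point -- and the reason Steps \ref{step-2}--\ref{step-3} precede Step \ref{step-4} -- is precisely the balanced-split step: without the triple bound of Observation \ref{obs:5:1}, a value-$\ge 1$ bundle whose items are each just below $t/2$ cannot be cut into two pieces of value $\ge t$, so bounding the number of large items (hence the overshoot by $t/6$) is what makes the split possible, and the calibration $t=6/13$ is exactly what forces $7t/6\le u(X)-t$.
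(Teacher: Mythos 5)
Your proof is correct and follows essentially the same route as the paper's: both split $X$ into two parts of $u$-value at least $t$, using the triple bound of Observation \ref{obs:5:1} to cap the overshoot of the first part by $t/6$ (so $u(X^1)<7t/6$ and $u(X^2)>1-7t/6=t$ with $t=6/13$), and then apply Lemma \ref{lem:important} to each part. The only cosmetic difference is the split construction: you take the shortest prefix of the decreasing $u$-order reaching $t$, whereas the paper takes a minimal subset containing the two most valuable items $g_1,g_2$ with $u$-value at least $t$ --- the two constructions yield identical bounds.
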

\begin{proof}
    The idea is to partition the set $X$ into two bundles $X^1$ and $X^2$ each with value at least $t$ to agent $i^*$. Then using Lemma \ref{lem:important}, we prove the contribution of each of these bundles to $\allocation$ is at least $t/2$ and thus the total contribution is at least $t$.
    
    For a fixed bundle $X \in B_0$, let $j$ be such that $u_{i^*,j}(X) = v_{i^*}(X) \geq 1$. Let $g_1$ and $g_2$ be two different most valuable items in $X$ with respect to $u_{i^*, j}$, i.e., for all items $g \in X \setminus \{g_1, g_2\}$, $u_{i^*,j}(g_1) \geq u_{i^*,j}(g_2) \geq u_{i^*,j}(g)$. Let $X^1$ be a minimal subset of $X$ such that $\{g_1, g_2\} \subset X^1$ and $u_{i^*, j}(X^1) \geq t$. Let $X^2 = X \setminus X^1$. Since $X^1$ is minimal, for all $g \in X_1$, $u_{i^*, j}(X^1 \setminus \{g\}) < t$. Also, by Observation \ref{obs:5:1}, for all $g \in X^1 \setminus \{g_1,g_2\}$, $u_{i^*,j}(\{g_1, g_2, g\}) \leq v_{i^*}(\{g_1, g_2, g\}) < t/2$ and thus, $u_{i^*, j}(g) < t/6$. Therefore, for all $g \in X^1 \setminus \{g_1,g_2\}$,
    \begin{align*}
        u_{i^*, j}(X^2) &\geq 1 - u_{i^*, j}(X^1) \tag{$u_{i^*, j}(X^1 \cup X^2) \geq 1$}\\
        &= 1- \left(u_{i^*, j}(X^1 \setminus \{g\}) + u_{i^*, j}(g)\right) \tag{by additivity of $u_{i^*,j}$}\\
        &> 1- \frac{7}{6}t \\
        &= t. \tag{$t=6/13$}
    \end{align*}
    Hence, we have $v_{i^*}(X^1) \geq u_{i^*, j}(X_1) \geq t$ and $v_{i^*}(X^2) \geq u_{i^*, j}(X_2) \geq t$. Now by Lemma \ref{lem:important}, we have
    \begin{align*}
        C^{\allocation}_{\bar{v}}(X^1)+C^{\allocation}_{\bar{v}}(X^2) &\geq \left( \bar{v}_{i^*}(X^1) - \bar{v}_{i^*}(\allocation_{i^*}) \right) + \left( \bar{v}_{i^*}(X^2) - \bar{v}_{i^*}(\allocation_{i^*}) \right) \\
        &> 2(t-\frac{1}{2}t) = t.
    \end{align*}
\end{proof}

\begin{lemma}\label{B1-bound}
    After Step \ref{step-3}, for all bundles $X \in B_1$, there exists a partition of $X$ into $X^1$ and $X^2$ such that $C^{\allocation}_{\bar{v}}(X^1)+C^{\allocation}_{\bar{v}}(X^2) \geq \frac{2}{3}t$.
\end{lemma}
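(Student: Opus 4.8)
The plan is to mirror the structure of Lemma~\ref{B0-bound}: exhibit a partition $X = X^1 \cup X^2$ and then invoke Lemma~\ref{lem:important} together with the standing assumption $\bar{v}_{i^*}(\allocation_{i^*}) = v_{i^*}(\allocation_{i^*}) < t/2$. Concretely, Lemma~\ref{lem:important} gives
\[ C^{\allocation}_{\bar{v}}(X^1) + C^{\allocation}_{\bar{v}}(X^2) \ge \bar{v}_{i^*}(X^1) + \bar{v}_{i^*}(X^2) - 2\bar{v}_{i^*}(\allocation_{i^*}) > \bar{v}_{i^*}(X^1) + \bar{v}_{i^*}(X^2) - t, \]
so it suffices to build a partition with $\bar{v}_{i^*}(X^1) + \bar{v}_{i^*}(X^2) \ge \tfrac 53 t$. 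The first step is to quantify how much value survives in a $B_1$-bundle. Such a bundle is an $\MMS$ bundle $\hat X$ with $v_{i^*}(\hat X)\ge 1$ that lost a single item $g_0$ in Steps~\ref{step-2}--\ref{step-3}. I fix the additive function $u_{i^*,j}$ realizing $u_{i^*,j}(\hat X)=v_{i^*}(\hat X)\ge 1$ and work with it on the surviving set $X=\hat X\setminus\{g_0\}$. Observation~\ref{obs:4:1} gives $u_{i^*,j}(g_0)\le v_{i^*}(\{g_0\})<t/2$, hence $u_{i^*,j}(X)=u_{i^*,j}(\hat X)-u_{i^*,j}(g_0) > 1-t/2 = \tfrac 53 t$ (using $t=6/13$).

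The heart of the argument is the choice of partition, and this is where $B_1$ genuinely differs from $B_0$. In $B_0$ one has $u_{i^*,j}(X)\ge 1 = \tfrac{13}{6}t > 2t$, enough to carve out two parts each of value $\ge t$; here $u_{i^*,j}(X)$ may be barely above $\tfrac 53 t$, so driving a part up to $t$ risks leaving its complement below $\tfrac 23 t$. The fix is to target a single part $X^1$ with $u_{i^*,j}(X^1)\in[\tfrac 23 t, t)$. To build it, let $g_1,g_2$ be the two most valuable items of $X$ under $u_{i^*,j}$; Observation~\ref{obs:4:1} gives $u_{i^*,j}(g_1),u_{i^*,j}(g_2)<t/2$, and Observation~\ref{obs:5:1} applied to $\{g_1,g_2,g\}$ (exactly as in Lemma~\ref{B0-bound}) gives $u_{i^*,j}(g)<t/6$ for every other item $g$. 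If $u_{i^*,j}(\{g_1,g_2\})\ge \tfrac 23 t$ I set $X^1=\{g_1,g_2\}$, whose value lies in $[\tfrac 23 t, t)$ since two items each below $t/2$ sum to less than $t$. Otherwise I add the remaining items (each $<t/6$) to $\{g_1,g_2\}$ one at a time until the value first reaches $\tfrac 23 t$; the total $>\tfrac 53 t$ guarantees the threshold is crossed, and because the triggering item contributes less than $t/6$, the overshoot keeps $u_{i^*,j}(X^1)<\tfrac 23 t + t/6 = \tfrac 56 t < t$. Either way $u_{i^*,j}(X^1)\in[\tfrac 23 t, t)$, so $u_{i^*,j}(X^2)=u_{i^*,j}(X)-u_{i^*,j}(X^1) > \tfrac 53 t - t = \tfrac 23 t$.

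It remains to collect the bounds. Since $u_{i^*,j}(X^1)<t$ we get $\bar{v}_{i^*}(X^1)\ge u_{i^*,j}(X^1)\ge \tfrac 23 t$, and $\bar{v}_{i^*}(X^2)\ge \min(t,u_{i^*,j}(X^2))>\tfrac 23 t$. A short case split finishes it: if $u_{i^*,j}(X^2)\le t$ the sum is at least $u_{i^*,j}(X^1)+u_{i^*,j}(X^2)=u_{i^*,j}(X)>\tfrac 53 t$, and if $u_{i^*,j}(X^2)>t$ it is at least $\tfrac 23 t + t = \tfrac 53 t$. Substituting into the displayed consequence of Lemma~\ref{lem:important} yields $C^{\allocation}_{\bar{v}}(X^1)+C^{\allocation}_{\bar{v}}(X^2) > \tfrac 53 t - t = \tfrac 23 t$, as claimed. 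I expect the delicate point to be precisely the overshoot control in the partitioning step: the natural ``minimal set of value $\ge t$'' construction from $B_0$ overshoots by up to $t/6$ and leaves the complement only $>t/2$, which is short of the required $\tfrac 23 t$; aiming at $\tfrac 23 t$ rather than $t$, and exploiting that every item outside the top two is below $t/6$, is what rescues the bound.
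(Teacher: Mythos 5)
Your proof is correct and takes essentially the same route as the paper's: the same partition built from the two most valuable items $g_1,g_2$ extended minimally/greedily to the threshold $\frac{2}{3}t$, the same overshoot control via the $u_{i^*,j}(g)<t/6$ bound from Observation~\ref{obs:5:1}, the same total-value bound $u_{i^*,j}(X)>1-t/2=\frac{5}{3}t$, and the same final application of Lemma~\ref{lem:important} with $\bar{v}_{i^*}(\allocation_{i^*})<t/2$. Your explicit case split on whether $u_{i^*,j}(X^2)\le t$ is just a cleaner reorganization of the paper's $\min$-manipulation, and your separate treatment of the case $X^1=\{g_1,g_2\}$ (where the paper's asserted bound $u_{i^*,j}(X^1)<\frac{5}{6}t$ tacitly assumes $X^1\setminus\{g_1,g_2\}\neq\emptyset$, though its argument survives there since $u_{i^*,j}(X^1)<t$ still holds) is if anything slightly more careful than the original.
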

\begin{proof}
    Fix a bundle $X \in B_1$. By Lemma \ref{ass:1}, the $\MMS$ value of agent $i^*$ is at least $1$ after Step \ref{step-1}. By Observation \ref{obs:4:1}, $v_{i^*}(g) < t/2$ for all remaining items $g$ after Step \ref{step-1}. Since $X$ is a bundle in an $\MMS$ partition of agent $i^*$ after Step \ref{step-1} and after the removal of one item $g$, we have 
    \begin{align}
        v_{i^*}(X) > 1-t/2.
    \end{align}
    Let $j$ be such that $u_{i^*,j}(X) = v_{i^*}(X)$.
    Let $g_1$ and $g_2$ be two different most valuable items in $X$ with respect to $u_{i^*, j}$, i.e., for all items $g \in X \setminus \{g_1, g_2\}$, $u_{i^*,j}(g_1) \geq u_{i^*,j}(g_2) \geq u_{i^*,j}(g)$. Let $X^1$ be a minimal subset of $X$ such that $\{g_1, g_2\} \subset X^1$ and $u_{i^*, j}(X^1) \geq 2t/3$. Let $X^2 = X \setminus X^1$. Since $X^1$ is minimal, for all $g \in X_1$, $u_{i^*, j}(X^1 \setminus \{g\}) < 2t/3$. Also, by Observation \ref{obs:5:1}, for all $g \in X^1 \setminus \{g_1,g_2\}$, $u_{i^*,j}(\{g_1, g_2, g\}) \leq v_{i^*}(\{g_1, g_2, g\}) < t/2$ and thus, $u_{i^*, j}(g) < t/6$. Therefore,
    \begin{align*}
        u_{i^*, j}(X_1) &= u_{i^*, j}(X_1 \setminus \{g\}) + u_{i^*, j}(g) \tag{by additivity of $u_{i^*, j}$}\\
        &< \frac{2}{3}t + \frac{1}{6}t = \frac{5}{6}t.
    \end{align*}
    Therefore, for all $g \in X^1 \setminus \{g_1,g_2\}$, we have
    \begin{align*}
        C^{\allocation}_{\bar{v}}(X^1)+C^{\allocation}_{\bar{v}}(X^2) &\geq \left( \bar{v}_{i^*}(X^1) - \bar{v}_{i^*}(\allocation_{i^*}) \right) + \left( \bar{v}_{i^*}(X^2) - \bar{v}_{i^*}(\allocation_{i^*}) \right) \tag{Lemma \ref{lem:important}}\\
        &= \left(\min(t, v_{i^*}(X^1)) - \bar{v}_{i^*}(\allocation_{i^*})\right) + \left(\min(t, v_{i^*}(X^2)) - \bar{v}_{i^*}(\allocation_{i^*})\right) \\
        &> \left(\min(t, u_{i^*, j}(X^1)) - \frac{1}{2}t\right) + \left(\min(t, u_{i^*, j}(X^2)) - \frac{1}{2}t \right) \\
        &\geq u_{i^*, j}(X^1) + \min(t, 1-\frac{1}{2}t - u_{i^*, j}(X^1)) - t \tag{$u_{i^*, j}(X^1) < 5t/6$}\\
        &\geq \min (u_{i^*, j}(X^1), 1-\frac{3}{2}t) \\
        &\geq \frac{2}{3}t.
    \end{align*}
\end{proof}

\begin{lemma}\label{B2-bound}
    After Step \ref{step-3}, for all bundles $X \in B_2$, $C^\allocation_{\bar{v}}(X)\geq \frac{1}{2}t$.
\end{lemma}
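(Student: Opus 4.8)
The plan is to bound the contribution of a bundle $X \in B_2$ from below by $t/2$ using Lemma~\ref{lem:important}, which tells us that $C^{\allocation}_{\bar v}(S) \ge \bar v_{i^*}(S) - \bar v_{i^*}(\allocation_{i^*})$ for every set $S$. Since $i^*$ is the deficient agent with $\bar v_{i^*}(\allocation_{i^*}) = v_{i^*}(\allocation_{i^*}) < t/2$, it suffices to show that $\bar v_{i^*}(X) \ge t$; then $C^{\allocation}_{\bar v}(X) \ge t - \tfrac12 t = \tfrac12 t$, as desired. So the whole task reduces to establishing a lower bound of $t$ on the truncated value $\bar v_{i^*}(X) = \min(t, v_{i^*}(X))$, i.e. to showing $v_{i^*}(X) \ge t$.

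First I would recall what is known about $X$. The bundle $X$ came from an $\MMS$ partition of agent $i^*$ taken after Step~\ref{step-1}, so it originally had value at least $1$ to $i^*$; being in $B_2$ means it lost exactly two items in Steps~\ref{step-2} and~\ref{step-3}. I would pick the additive representative $j$ of $i^*$ with $u_{i^*,j}(X^{\mathrm{orig}}) = v_{i^*}(X^{\mathrm{orig}}) \ge 1$ for the original bundle, and track how much value the two removed items carried. By Observation~\ref{obs:4:1} each single item is worth less than $t/2$ to $i^*$ (with respect to $v_{i^*}$, hence with respect to $u_{i^*,j}$), so the two removed items together carry less than $2 \cdot (t/2) = t$ of $u_{i^*,j}$-value. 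Consequently the surviving bundle $X$ satisfies $u_{i^*,j}(X) > 1 - t$, and since $v_{i^*}(X) \ge u_{i^*,j}(X)$, we get $v_{i^*}(X) > 1 - t$. With $t = 6/13$ this gives $v_{i^*}(X) > 7/13 = 1 - t > t = 6/13$, which is exactly the bound we need.

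The plan is therefore: invoke Lemma~\ref{lem:important} with $S = X$; bound $\bar v_{i^*}(\allocation_{i^*}) < t/2$ from the assumption that $i^*$ is deficient; and establish $v_{i^*}(X) \ge t$ by the value-accounting argument of the previous paragraph, using Observation~\ref{obs:4:1} to cap the value of each of the two lost items by $t/2$. Combining these three ingredients yields
\[
 C^{\allocation}_{\bar v}(X) \ge \bar v_{i^*}(X) - \bar v_{i^*}(\allocation_{i^*}) > t - \tfrac12 t = \tfrac12 t,
\]
which is the claim. Note that unlike the $B_0$ and $B_1$ cases, here we do not need to split $X$ into two subbundles: the single set $X$ already carries enough value, because losing only two items from a bundle of value at least $1$ leaves more than $t$ behind.

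The main obstacle to watch is the bound on the value of the removed items. Observation~\ref{obs:4:1} guarantees each remaining single item is worth less than $t/2$ only \emph{after Step~\ref{step-1}}, so I must be careful that the two items removed from $X$ in Steps~\ref{step-2} and~\ref{step-3} were themselves present (and hence small) after Step~\ref{step-1} — which they are, since $X$ is defined relative to the post-Step-\ref{step-1} partition. The inequality $1 - t > t$ is the tight point of the argument: it holds precisely because $t = 6/13 < 1/2$, so the slack $1 - 2t = 1/13 > 0$ is what makes the $B_2$ case go through, and this is exactly where the choice of the constant $3/13$ gets used.
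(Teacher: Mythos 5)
Your proposal is correct and matches the paper's own proof essentially step for step: both invoke Lemma~\ref{lem:important} with $S = X$, use $\bar{v}_{i^*}(\allocation_{i^*}) < t/2$ from the deficiency assumption, and derive $v_{i^*}(X) > 1 - t > t$ from Observation~\ref{obs:4:1} applied to the two items removed in Steps~\ref{step-2} and~\ref{step-3}. Your explicit value-accounting through the additive representative $u_{i^*,j}$ just spells out what the paper states more tersely, and your closing remark about $1 - 2t = 1/13 > 0$ being the tight point is an accurate observation about where the constant $t = 6/13$ is used.
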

\begin{proof}
    Fix a bundle $X \in B_2$. By Lemma \ref{ass:1}, the $\MMS$ value of agent $i^*$ is at least $1$ after Step \ref{step-1}. By Observation \ref{obs:4:1}, $v_{i^*}(\{g\}) < t/2$ for all remaining items $g$ after Step \ref{step-1}. Since $X$ is a bundle in an $\MMS$ partition of agent $i^*$ after Step \ref{step-1} and after the removal of two items like $g$, we have $v_{i^*}(X) > 1-t > t$.
    Therefore, $\bar{v}_{i^*}(X)=\min(t,v_{i^*}(X))=t$.
    Now by Lemma \ref{lem:important},
    \begin{align*}
        C^\allocation_{\bar{v}}(X) \geq \bar{v}_{i^*}(X) - \bar{v}_{i^*}(\allocation_i) > t-\frac{1}{2}t = \frac{1}{2}t.
    \end{align*}
\end{proof}

\ExPost*
\begin{proof}
    Let $\allocation$ be the output of Algorithm \ref{algo:det}. Towards a contradiction, assume for agent $i^*$, $v_{i^*}(\allocation_{i^*}) < 3/13 = t/2$.
    For all agents $i$ which are removed during the first three steps, we have $v_i(\allocation_i) \geq t/2=3/13$. Therefore, $i^* \in \agents'$. For all $X \in B_0$, let $X^1$ and $X^2$ be as defined in Lemmas \ref{B0-bound} and \ref{B1-bound}. We have
    \begin{align*}
        t(n'-\frac{1}{2}) &> \sum_{i \in \agents'} \bar{v}_i(\allocation_i) \tag{for all $i \in \agents', \bar{v}_i(\allocation_i) \leq t$ and $\bar{v}_{i^*}(\allocation_{i^*}) < t/2$}\\
        &\geq \sum_{X \in B_0} \left(C^\allocation_{\bar{v}}(X^1) + C^\allocation_{\bar{v}}(X^2) \right) + \sum_{X \in B_1} \left(C^\allocation_{\bar{v}}(X^1) + C^\allocation_{\bar{v}}(X^2) \right) + \sum_{X \in B_2} C^\allocation_{\bar{v}}(X) \tag{Lemma \ref{lem:supper-additive}}\\
        &\geq t|B_0| + \frac{2}{3}t|B_1| + \frac{1}{2}t|B_2| \tag{Lemmas \ref{B0-bound}, \ref{B1-bound} and \ref{B2-bound}}\\
        &\geq tn', \tag{Lemma \ref{cl:1:1}}
    \end{align*}
    which is a contradiction. Therefore, such an agent $i^*$ does not exist and $\allocation$ is a $3/13$-$\MMS$~allocation.
\end{proof}

\section{Future Directions}
We developed randomized and deterministic allocations guaranteeing approximations of maximin-share for fractionally subadditive valuations. For randomized allocations, we derived simulataneous ex-ante and ex-post guarantees. Several interesting questions remain open. 

The most straight-forward direction is to improve  approximation guarantees for both ex-ante and ex-post cases. The first result we obtained in this paper is an allocation that is simultaneously $1/4$-$\MMS$ ex-ante and $1/8$-$\MMS$ ex-post. Also, we proved the existence of an allocation which is $3/13$-$\MMS$ ex-post. None of these results are known to be tight. Therefore, the following questions remain open.
   
\begin{question} 	
Can we find a randomized allocation with ex-ante $\MMS$ approximation guarantee better than 1/4 for fractionally subadditive valuations?
\end{question}

Note that the result of Lemma \ref{lem:1} shows that no randomized allocation can guarantee a fraction better than $3/4$-$\MMS$ ex-ante. Therefore, a gap of $[1/4,3/4)$ remains between the best upper bound and the best lower bound for guaranteeing $\MMS$ ex-ante for fractionally subadditive valuations.

\begin{question} 	
Can we find an allocation with ex-post $\MMS$ approximation guarantee better than $3/13$?
\end{question}

Moreover, for simultaneous ex-ante and ex-post guarantees there might be room for improving Theorem \ref{thm:2}. 
\begin{question} 	
Can we find better allocations guaranteeing $\MMS$ both ex-ante and ex-post simultaneously?
\end{question}

We believe that extending the idea in the second step of the proof of Theorem \ref{thm:1} might help improving the ex-post guarantee: we can add additional steps to check if some agent can be satisfied with two, three or more items. This way, we would have better upper bounds on the value of remaining items that head to the next step. However, having such additional steps need a more careful analysis of the maximin-share value of the remaining agents. 

Another notable point about the results of this paper is that the algorithms defined in Theorems \ref{thm:1}, \ref{thm:2} and \ref{thm:ExPost} are not necessarily implementable in polynomial time. 
The reason is that calculating the exact value of $\MMS_i$ for each agent and also calculating an $\MMS$ partition of the agents is not implementable in polynomial time. Currently, the best polynomial time ex-post guarantee for $\MMS$ in the $\XOS$ setting is $1/8$ \cite{ghodsi2018fair}.
This result can also be considered as the best polynomial time algorithm for finding an allocation with ex-ante guarantee for $\MMS$ in the additive case. 

\begin{question} 
Can we find an allocation with a constant ex-ante $\MMS$ approximation guarantee better than $1/8$ that can be implemented in polynomial time? 
\end{question}

Also, an interesting open direction is to find similar results for other classes of valuation functions, including submodular and subadditive valuations. 

\begin{question} 
Can we find simultaneous ex-ante and ex-post guarantees for  $\MMS$ for submodular and subadditive set functions?
\end{question}

Currently the best guarantee for the submodular setting is $10/27$-$\MMS$ ex-post \cite{uziahu2023fair}. Also, for the subadditive case, the best known ex-post approximation guarantee is $O(\frac{1}{\log n \log \log n})$.

\newpage
\bibliographystyle{apalike} 
\bibliography{efr}

\begin{thebibliography}{}

\bibitem[Akrami and Garg, 2024]{Akrami2023BreakingT3}
Akrami, H. and Garg, J. (2024).
\newblock Breaking the 3/4 barrier for approximate maximin share.
\newblock In {\em Proceedings of the 2024 {ACM-SIAM} Symposium on Discrete
  Algorithms, {SODA}, 2024}.

\bibitem[Akrami et~al., 2023a]{akrami2023simplification}
Akrami, H., Garg, J., Sharma, E., and Taki, S. (2023a).
\newblock Simplification and improvement of {MMS} approximation.
\newblock {\em CoRR}, abs/2303.16788.

\bibitem[Akrami et~al., 2023b]{Akrami2023ImprovingAG}
Akrami, H., Garg, J., and Taki, S. (2023b).
\newblock Improving approximation guarantees for maximin share.
\newblock {\em CoRR}, abs/2307.12916.

\bibitem[Aleksandrov et~al., 2015]{DBLP:conf/ijcai/AleksandrovAGW15}
Aleksandrov, M., Aziz, H., Gaspers, S., and Walsh, T. (2015).
\newblock Online fair division: Analysing a food bank problem.
\newblock In {\em {IJCAI} 2015, Buenos Aires, Argentina, July 25-31, 2015},
  pages 2540--2546. {AAAI} Press.

\bibitem[Amanatidis et~al., 2019]{amanatidis2019multiple}
Amanatidis, G., Ntokos, A., and Markakis, E. (2019).
\newblock Multiple birds with one stone: Beating $1/2$ for efx and gmms via
  envy cycle elimination.
\newblock {\em arXiv preprint arXiv:1909.07650}.

\bibitem[Aziz, 2020]{DBLP:journals/corr/abs-2002-10171}
Aziz, H. (2020).
\newblock A probabilistic approach to voting, allocation, matching, and
  coalition formation.
\newblock {\em CoRR}, abs/2002.10171.

\bibitem[Aziz et~al., 2023]{aamas23Aziz}
Aziz, H., Ganguly, A., and Micha, E. (2023).
\newblock Best of both worlds fairness under entitlements.
\newblock {\em AAMAS}.

\bibitem[Babaioff et~al., 2021]{DBLP:conf/aaai/BabaioffEF21}
Babaioff, M., Ezra, T., and Feige, U. (2021).
\newblock Fair and truthful mechanisms for dichotomous valuations.
\newblock In {\em {AAAI} 2021, {IAAI} 2021, {EAAI} 2021, Virtual Event,
  February 2-9, 2021}, pages 5119--5126. {AAAI} Press.

\bibitem[Babaioff et~al., 2022]{babaioff2022best}
Babaioff, M., Ezra, T., and Feige, U. (2022).
\newblock On best-of-both-worlds fair-share allocations.
\newblock In {\em WINE 2022, Troy, NY, USA, December 12--15, 2022,
  Proceedings}, pages 237--255. Springer.

\bibitem[Barman and Krishna~Murthy, 2017]{barman2017approximation}
Barman, S. and Krishna~Murthy, S.~K. (2017).
\newblock Approximation algorithms for maximin fair division.
\newblock In {\em Proceedings of the 2017 ACM Conference on Economics and
  Computation}, pages 647--664. ACM.

\bibitem[Barman and Krishnamurthy, 2020]{DBLP:journals/teco/BarmanK20}
Barman, S. and Krishnamurthy, S.~K. (2020).
\newblock Approximation algorithms for maximin fair division.
\newblock {\em {ACM} Trans. Economics and Comput.}, 8(1):5:1--5:28.

\bibitem[Bogomolnaia and Moulin, 2001]{DBLP:journals/jet/BogomolnaiaM01}
Bogomolnaia, A. and Moulin, H. (2001).
\newblock A new solution to the random assignment problem.
\newblock {\em J. Econ. Theory}, 100(2):295--328.

\bibitem[Budish, 2011]{Budish:first}
Budish, E. (2011).
\newblock The combinatorial assignment problem: Approximate competitive
  equilibrium from equal incomes.
\newblock {\em Journal of Political Economy}, 119(6):1061--1103.

\bibitem[Caragiannis et~al., 2019]{caragiannis2019unreasonable}
Caragiannis, I., Kurokawa, D., Moulin, H., Procaccia, A.~D., Shah, N., and
  Wang, J. (2019).
\newblock The unreasonable fairness of maximum nash welfare.
\newblock {\em ACM Transactions on Economics and Computation (TEAC)}, 7(3):12.

\bibitem[Dehghani et~al., 2018]{dehghani2018envy}
Dehghani, S., Farhadi, A., HajiAghayi, M., and Yami, H. (2018).
\newblock Envy-free chore division for an arbitrary number of agents.
\newblock In {\em Proceedings of the Twenty-Ninth Annual ACM-SIAM Symposium on
  Discrete Algorithms}, pages 2564--2583. SIAM.

\bibitem[Dickerson et~al., 2014]{dickerson2014computational}
Dickerson, J.~P., Goldman, J., Karp, J., Procaccia, A.~D., and Sandholm, T.
  (2014).
\newblock The computational rise and fall of fairness.
\newblock In {\em Twenty-Eighth AAAI Conference on Artificial Intelligence}.

\bibitem[Farhadi et~al., 2019]{asymmetric}
Farhadi, A., Ghodsi, M., Hajiaghayi, M.~T., Lahaie, S., Pennock, D.~M.,
  Seddighin, M., Seddighin, S., and Yami, H. (2019).
\newblock Fair allocation of indivisible goods to asymmetric agents.
\newblock {\em J. Artif. Intell. Res.}, 64:1--20.

\bibitem[Feige, 2009]{DBLP:journals/siamcomp/Feige09}
Feige, U. (2009).
\newblock On maximizing welfare when utility functions are subadditive.
\newblock {\em {SIAM} J. Comput.}, 39(1):122--142.

\bibitem[Feige et~al., 2022]{feige2022tight}
Feige, U., Sapir, A., and Tauber, L. (2022).
\newblock A tight negative example for mms fair allocations.
\newblock In {\em Web and Internet Economics: 17th International Conference,
  WINE 2021, Potsdam, Germany, December 14--17, 2021, Proceedings}, pages
  355--372. Springer.

\bibitem[Feldman et~al., 2023]{cycle-breaking}
Feldman, M., Mauras, S., Narayan, V.~V., and Ponitka, T. (2023).
\newblock Breaking the envy cycle: Best-of-both-worlds guarantees for
  subadditive valuations.
\newblock {\em CoRR}, abs/2304.03706.

\bibitem[Foley, 1967]{Foley:first}
Foley, D.~K. (1967).
\newblock Resource allocation and the public sector.
\newblock {\em YALE ECON ESSAYS, VOL 7, NO 1, PP 45-98, SPRING 1967. 7 FIG, 13
  REF.}

\bibitem[Freeman et~al., 2020]{freeman2020best}
Freeman, R., Shah, N., and Vaish, R. (2020).
\newblock Best of both worlds: Ex-ante and ex-post fairness in resource
  allocation.
\newblock In {\em Proceedings of the 21st ACM Conference on Economics and
  Computation}, pages 21--22.

\bibitem[Garg and Taki, 2021]{DBLP:journals/ai/GargT21}
Garg, J. and Taki, S. (2021).
\newblock An improved approximation algorithm for maximin shares.
\newblock {\em Artif. Intell.}, 300:103547.

\bibitem[Ghodsi et~al., 2018]{ghodsi2018fair}
Ghodsi, M., HajiAghayi, M., Seddighin, M., Seddighin, S., and Yami, H. (2018).
\newblock Fair allocation of indivisible goods: Improvements and
  generalizations.
\newblock In {\em Proceedings of the 2018 ACM Conference on Economics and
  Computation}, pages 539--556. ACM.

\bibitem[Halpern et~al., 2020]{DBLP:conf/wine/0002PP020}
Halpern, D., Procaccia, A.~D., Psomas, A., and Shah, N. (2020).
\newblock Fair division with binary valuations: One rule to rule them all.
\newblock In {\em {WINE} 2020, Beijing, China, December 7-11, 2020,
  Proceedings}.

\bibitem[Hoefer et~al., 2023]{DBLP:journals/corr/abs-2209-03908}
Hoefer, M., Schmalhofer, M., and Varricchio, G. (2023).
\newblock Best of both worlds: Agents with entitlements.
\newblock In {\em Proceedings of the 2023 International Conference on
  Autonomous Agents and Multiagent Systems, {AAMAS} 2023, London, United
  Kingdom, 29 May 2023 - 2 June 2023}, pages 564--572. {ACM}.

\bibitem[Kurokawa et~al., 2018]{kurokawa2018fair}
Kurokawa, D., Procaccia, A.~D., and Wang, J. (2018).
\newblock Fair enough: Guaranteeing approximate maximin shares.
\newblock {\em Journal of the ACM (JACM)}, 65(2):8.

\bibitem[Li and Vetta, 2018]{DBLP:conf/wine/LiV18}
Li, Z. and Vetta, A. (2018).
\newblock The fair division of hereditary set systems.
\newblock In {\em {WINE} 2018, Oxford, UK, December 15-17, 2018, Proceedings}.

\bibitem[Nash~Jr, 1950]{nash1950bargaining}
Nash~Jr, J.~F. (1950).
\newblock The bargaining problem.
\newblock {\em Econometrica: Journal of the Econometric Society}, pages
  155--162.

\bibitem[Plaut and Roughgarden, 2020]{DBLP:journals/siamdm/PlautR20}
Plaut, B. and Roughgarden, T. (2020).
\newblock Almost envy-freeness with general valuations.
\newblock {\em {SIAM} J. Discret. Math.}, 34(2):1039--1068.

\bibitem[Procaccia and Wang, 2014]{10.1145/2600057.2602835}
Procaccia, A.~D. and Wang, J. (2014).
\newblock Fair enough: Guaranteeing approximate maximin shares.
\newblock EC '14, page 675–692, New York, NY, USA. Association for Computing
  Machinery.

\bibitem[Seddighin and Seddighin, 2022]{seddighin2022improved}
Seddighin, M. and Seddighin, S. (2022).
\newblock Improved maximin guarantees for subadditive and fractionally
  subadditive fair allocation problem.
\newblock In {\em {AAAI} 2022, {IAAI} 2022, {EAAI} 2022 Virtual Event, February
  22 - March 1, 2022}. {AAAI} Press.

\bibitem[Steinhaus, 1948]{Steinhaus:first}
Steinhaus, H. (1948).
\newblock The problem of fair division.
\newblock {\em Econometrica}, 16(1).

\bibitem[Uziahu and Feige, 2023]{uziahu2023fair}
Uziahu, G.~B. and Feige, U. (2023).
\newblock On fair allocation of indivisible goods to submodular agents.

\bibitem[Varian, 1973]{varian1973equity}
Varian, H.~R. (1973).
\newblock Equity, envy, and efficiency.

\end{thebibliography}
\newpage


\end{document}